\numberwithin{theorem}{section}
\numberwithin{equation}{section}
\numberwithin{figure}{section}
\numberwithin{table}{section}
\renewcommand{\mathcal}[1]{\mathscr{#1}}
\providecommand{\noopsort}[1]{}
\newcommand{\newsiamthmnonumber}[2]{
  \theoremstyle{nonumberplain}
  \theoremheaderfont{\normalfont\sc}
  \theorembodyfont{\normalfont\itshape}
  \theoremseparator{.}
  \theoremsymbol{}
  \newtheorem{#1}[theorem]{#2}
}
\renewcommand{\mathcal}[1]{\mathscr{#1}}
\newcommand{\norm}[1]{|\!|\!| #1 | \! | \! |}
\newcommand{\NORM}[1]{\left|\!\left|\!\left| #1 \right| \! \right| \! \right|}
\newcommand{\pomo}{\{-1,1\}}
\newcommand{\pomon}{\pomo^n}
\renewcommand{\epsilon}{\varepsilon}
\newcommand{\TheLongTitle}{An Optimal Separation of Randomized and Quantum Query Complexity}
\newcommand{\TheRunningTitle}{Randomized versus Quantum Query Complexity}
\newcommand{\TheAuthors}{Alexander A. Sherstov, Andrey A. Storozhenko, and Pei Wu}
\title{{\TheLongTitle}\thanks{Submitted to the editors
January 15, 2022.  
A 14-page extended abstract of this paper appeared in \emph{Proceedings
of the 53rd Annual ACM Symposium on the Theory of Computing} (STOC),
2021, pages 1289\textendash 1302.
\funding{This work was supported in part by NSF grant CCF-1814947.}
}}
\author{Alexander A. Sherstov\thanks{University of California,
    Los Angeles, CA~90095
    (\email{sherstov@cs.ucla.edu}, 
     \email{storozhenko@cs.ucla.edu}).}
\and
Andrey A. Storozhenko\footnotemark[2]
\and
Pei Wu\thanks{Institute for Advanced Study, Princeton, NJ
08540 (\email{pwu@ias.edu}).
This work was done while the third author was a
Ph.D.~student at the University of California, Los
Angeles.}
}
\headers{\TheRunningTitle}{\TheAuthors}
\begin{document}

\maketitle

\begin{abstract}
We prove that for every decision tree, the absolute values of the
Fourier coefficients of a given order $\ell\geq1$ sum to at most $c^{\ell}\sqrt{\binom{d}{\ell}(1+\log n)^{\ell-1}},$
where $n$ is the number of variables, $d$ is the tree depth, and
$c>0$ is an absolute constant. This bound is essentially tight and
settles a conjecture due to Tal~(arxiv 2019; FOCS 2020). The bounds
prior to our work degraded rapidly with $\ell,$ becoming trivial
already at $\ell=\sqrt{d}.$

As an application, we obtain, for every integer $k\geq1,$ a partial
Boolean function on $n$ bits that has bounded-error quantum query
complexity at most $k$ and randomized query complexity $\tilde{\Omega}(n^{1-\frac{1}{2k}}).$
This separation of bounded-error quantum versus randomized query complexity
is best possible, by the results of Aaronson and Ambainis~(STOC 2015)
and Bravyi, Gosset, Grier, and Schaeffer~(2021). Prior to our work, the best known separation
was polynomially weaker: $O(1)$ versus $\Omega(n^{2/3-\epsilon})$
for any $\epsilon>0$ (Tal, FOCS 2020).

As another application, we obtain an essentially optimal separation
of $O(\log n)$ versus $\Omega(n^{1-\epsilon})$ for bounded-error
quantum versus randomized communication complexity, for any $\epsilon>0.$
The best previous separation was polynomially weaker: $O(\log n)$
versus $\Omega(n^{2/3-\epsilon})$ (implicit in Tal,~FOCS~2020).
\end{abstract}

\begin{keywords}
Quantum-classical separations, query complexity, communication complexity, forrelation, Fourier analysis of Boolean functions, Fourier weight of decision trees
\end{keywords}

\begin{AMS}
68Q12, 68Q17, 68Q15, 81P45
\end{AMS}

\section{Introduction}

Understanding the relative power of quantum and classical computing
is of basic importance in theoretical computer science. This question
has been studied most actively in the \emph{query model}, which is
tractable enough to allow unconditional lower bounds yet rich enough
to capture most of the known quantum algorithms. Illustrative examples
include the quantum algorithms of Deutsch and Jozsa~\cite{deutschjozsa1992rapid},
Bernstein and Vazirani~\cite{bernstein-vazirani93}, Grover~\cite{grover96search},
and Shor's period-finding~\cite{shor94factoring}. In the query model,
the task is to evaluate a fixed function $f$ on an unknown $n$-bit
input $x$. In the classical setting, query algorithms are commonly
referred to as \emph{decision trees. }A decision tree accesses the
input one bit at a time, choosing the bits to query in adaptive fashion.
The objective is to determine $f(x)$ by querying as few bits as possible.
The minimum number of queries needed to determine $f(x)$ in the worst
case is called the \emph{query complexity of $f$}. The quantum model
is a far-reaching generalization of the classical decision tree whereby
all bits can be queried in superposition with a single query. The
catch is that the outcomes of those queries are then also in superposition,
and it is not clear a priori whether quantum query algorithms are
more powerful than decision trees. We focus on the \emph{bounded-error}
regime, where the query algorithm (quantum or classical) is allowed
to err on any given input with probability $\epsilon$ for some constant
$\epsilon<1/2$.

The comparative power of randomized and quantum query algorithms has
been studied for more than two decades. In pioneering work, Deutsch
and Jozsa~\cite{deutschjozsa1992rapid} gave a quantum query algorithm
that solves, with a single query, a problem on $n$ bits that any
deterministic decision tree needs at least $n/2$ queries to solve.
Unfortunately, this separation does not apply to the more subtle,
bounded-error setting. This was addressed in follow-up work by Simon~\cite{simon97power},
who exhibited a problem with bounded-error quantum query complexity
$O(\log^{2}n)$ and randomized query complexity $\Omega(\sqrt{n})$.
These are striking examples of the computational advantages afforded
by the quantum model.

\subsection{Forrelation and rorrelation}

The above results leave us with a fundamental question: what is the
largest possible separation between bounded-error quantum and randomized
query complexity, for a problem with $n$-bit input? This question
was raised by Buhrman et al.~\cite{BFNR03quantum-property-testing}
and, a decade later, by Aaronson and Ambainis~\cite{AA15forrelation},
who presented it as being essential to understanding the phenomenon
of quantum speedups. Toward this goal, the authors of~\cite{AA15forrelation}
exhibited a problem that can be solved to bounded error with a single
quantum query but has randomized query complexity $\tilde{\Omega}(\sqrt{n}).$
They left open the challenge of obtaining a separation of $O(1)$
versus $\Omega(n^{\alpha})$ for some $\alpha>1/2.$ In more detail,
Aaronson and Ambainis~\cite{AA15forrelation} introduced and studied
the \emph{$k$-fold forrelation problem}. The input to the problem
is a $k$-tuple of vectors $x_{1},x_{2},\ldots,x_{k}\in\pomon,$ where
$n$ is a power of $2.$ Define 
\begin{equation}
\phi_{n,k}(x_{1},x_{2},\ldots,x_{k})=\frac{1}{n}\mathbf{1}^{\intercal}D_{x_{1}}HD_{x_{2}}HD_{x_{3}}H\cdots HD_{x_{k}}\mathbf{1},\label{eq:forrelation}
\end{equation}
where $\mathbf{1}$ is the all-ones vector, $H$ is the Hadamard transform
matrix of order $n$, and $D_{x_{i}}$ is the diagonal matrix with
the vector $x_{i}$ on the diagonal. Since each of the linear transformations
$H,D_{x_{1}},D_{x_{2}},\ldots,D_{x_{n}}$ preserves Euclidean length,
it follows that $|\phi_{n,k}(x_{1},x_{2},\ldots,x_{k})|\leq1.$ Given
$x_{1},x_{2},\ldots,x_{k},$ the forrelation problem is to distinguish
between the cases $|\phi_{n,k}(x_{1},x_{2},\ldots,x_{k})|\leq\alpha$
and $\phi_{n,k}(x_{1},x_{2},\ldots,x_{k})\geq\beta$, where the problem
parameters $0<\alpha<\beta<1$ are suitably chosen constants. Equation~(\ref{eq:forrelation})
directly gives a quantum algorithm that solves the forrelation problem
with bounded error and query cost $k,$ where the $k$ queries correspond
to the $k$ diagonal matrices. The cost can be further reduced to
$\lceil k/2\rceil$ by viewing~(\ref{eq:forrelation}) as the \emph{inner
product} of two vectors obtained by $\lceil k/2\rceil$ and $\lfloor k/2\rfloor$
applications, respectively, of diagonal matrices~\cite{AA15forrelation}.
Aaronson and Ambainis complemented this with an $\tilde{\Omega}(\sqrt{n})$
lower bound on the randomized query complexity of the forrelation
problem for $k=2$, hence the $1$ versus $\tilde{\Omega}(\sqrt{n})$
separation mentioned above.

Tal~\cite{Tal19Query} built on \cite{AA15forrelation} to give an
improved separation of $O(1)$ versus $\Omega(n^{2/3-\epsilon})$
for bounded-error quantum and randomized query complexities, for any
constant $\epsilon>0.$ For this, Tal replaced~(\ref{eq:forrelation})
with the more general quantity
\begin{equation}
\phi_{n,k,U}(x_{1},x_{2},\ldots,x_{k})=\frac{1}{n}\mathbf{1}^{\intercal}D_{x_{1}}UD_{x_{2}}UD_{x_{3}}U\cdots UD_{x_{k}}\mathbf{1},\label{eq:rorrelation}
\end{equation}
where $U$ is an arbitrary but fixed orthogonal matrix. On input $x_{1},x_{2},\ldots,x_{k}\in\pomon,$
the author of~\cite{Tal19Query} considered the problem of distinguishing
between the cases $|\phi_{n,k,U}(x_{1},x_{2},\ldots,x_{k})|\leq2^{-k-1}$
and $\phi_{n,k,U}(x_{1},x_{2},\ldots,x_{k})\geq2^{-k}.$ This problem
is referred to in~\cite{Tal19Query} as the \emph{$k$-fold rorrelation
problem with respect to $U.$} The quantum algorithm of Aaronson and
Ambainis, adapted to the arbitrary choice of $U,$ solves this new
problem with $\lceil k/2\rceil$ queries and advantage $\Omega(2^{-k})$
over random guessing, which counts as a bounded-error algorithm for
any constant $k.$ On the other hand, Tal~\cite{Tal19Query} proved
that the randomized query complexity of the $k$-fold rorrelation
problem for uniformly random $U$ is $\Omega(n^{2(k-1)/(3k-1)}/k\log n)$
with high probability. While this is weaker than Aaronson and Ambainis's
bound for $k=2,$ setting $k$ to a large constant gives a separation
of $O(1)$ versus $\Omega(n^{2/3-\epsilon})$ for bounded-error quantum
and randomized query complexity for any constant $\epsilon>0$.

\subsection{Our results: separations for partial functions}
Prior to our paper, Tal's separation of $O(1)$ versus $\Omega(n^{2/3-\epsilon})$
was the strongest known, and Aaronson and Ambainis's challenge of
obtaining an $O(1)$ versus $\Omega(n^{1-\epsilon})$ separation remained
open. The main contribution of our work is to resolve this question.
In what follows, we let $f_{n,k,U}$ denote the $k$-fold rorrelation
problem with respect to $U.$ We prove:
\begin{theorem}
\label{thm:MAIN-randomized-lower-bound} Let $n$ be a positive
integer. Let $U\in\Re^{n\times n}$
be a uniformly random orthogonal matrix. Then with probability $1-o(1)$ 
over the choice of $U,$ one has
\begin{align}
R_{\frac{1}{2}-\gamma}(f_{n,k,U}) & =\Omega\left(\frac{\gamma^{2}}{k}\cdot\frac{n^{1-\frac{1}{k}}}{(\log n)^{2-\frac{1}{k}}}\right)\label{eq:R-lower-bound-general-1}
\end{align}
for all integers $k\leq\frac{1}{3}\log n-1$ and
all $0\leq\gamma\leq1/2.$
\end{theorem}

\noindent For $k=2,$ this lower bound is the same (up
to a $\sqrt{\log n}$ factor) as Aaronson and
Ambainis's lower bound for the forrelation problem (which is $f_{n,2,H}$
in our notation). For $k=3$ already, Theorem~\ref{thm:MAIN-randomized-lower-bound}
is a polynomial improvement on all previous work, including Tal's
recent result~\cite{Tal19Query}. Up to logarithmic factors, Theorem~\ref{thm:MAIN-randomized-lower-bound}
is tight for every $k$ due to the matching upper bound $O_{k}(n^{1-1/k})$
of Bravyi, Gosset, Grier, and Schaeffer~\cite[Theorem~6]{bravyi2021quantum-to-classical}.

For any constant $k,$ the rorrelation problem $f_{n,k,U}$ has a
bounded-error quantum query algorithm with cost $\lceil k/2\rceil$
(see Section~\ref{subsec:rorrelation} for details). As a result,
by taking $k=2t$ for an integer $t,$ we obtain the following separation of bounded-error quantum and randomized
query complexities.
\begin{corollary}
\label{cor:bounded-error-t-separation}Let $t\geq1$ be a fixed integer.
Then there is a partial Boolean function $f$ on $\pomon$ with
\begin{align*}
 & Q_{\frac{1}{2}-\Omega(1)}(f)\leq t,\\
 & R_{1/3}(f)=\Omega\left(\frac{n^{1-\frac{1}{2t}}}{(\log n)^{2-\frac{1}{2t}}}\right).
\end{align*}
\end{corollary}

\noindent The separation in Corollary~\ref{cor:bounded-error-t-separation}
is best possible. Indeed, Bravyi~et al.~\cite[Theorems~3 and~6]{bravyi2021quantum-to-classical}
show that for every constant $t,$ every quantum algorithm with $t$
queries can be converted into a randomized decision tree of cost $O(\frac{1}{\epsilon}\cdot n^{1-\frac{1}{2t}})$
whose acceptance probabilities for all inputs are within an additive $\epsilon$
of the quantum algorithm's corresponding acceptance probabilities.
Taking $t$ large in Corollary~\ref{cor:bounded-error-t-separation}
gives the following clean result:
\begin{corollary}
\label{cor:MAIN-bounded-separation}Let $\epsilon>0$ be given. Then
there is a partial Boolean function $f$ on $\pomon$ with
\begin{align*}
Q_{1/3}(f) & =O(1),\\
R_{1/3}(f) & =\Omega(n^{1-\epsilon}).
\end{align*}
\end{corollary}

\noindent Again, this separation of bounded-error quantum and randomized
query complexities is best possible for all $f$ due to the aforementioned
result of Bravyi et al.~that every quantum algorithm with $t$ queries
can be simulated by a randomized query algorithm of cost $O_{t}(n^{1-\frac{1}{2t}})$.
In particular, Corollary~\ref{cor:MAIN-bounded-separation} shows
that the rorrelation problem separates quantum and randomized query
complexity optimally, of all problems $f$. The following incomparable
corollary can be obtained by taking $k=k(n)$ in Theorem~\ref{thm:MAIN-randomized-lower-bound}
to be an arbitrarily slow-growing function, e.g., $k=\log\log\log n$:

\begin{corollary}
\label{cor:MAIN-near-linear-separation}Let $\alpha\colon\NN\to\NN$
be any monotone function with $\alpha(n)\to\infty$ as $n\to\infty.$
Then there is a partial Boolean function $f$ on $\pomon$ with
\begin{align*}
Q_{1/3}(f) & \leq\alpha(n),\\
R_{1/3}(f) & \geq n^{1-o(1)}.
\end{align*}
\end{corollary}

\noindent As before, this quantum-classical separation is best possible
since~\cite{bravyi2021quantum-to-classical} rules out the possibility
of an $O(1)$ versus $n^{1-o(1)}$ gap.

A satisfying probability-theoretic interpretation of our results is
that the phenomenon of quantum-classical gaps is a common one. More
precisely, our results show that the set of orthogonal matrices $U$
for which $f_{n,k,U}$ does \emph{not} exhibit a best-possible quantum-classical
separation has Haar measure $0.$ Prior to our work, this was unknown
for any integer $k>2.$

\subsection{Our results: separation for total functions}

Our results so far pertain to \emph{partial} Boolean functions, whose
domain of definition is a proper subset of the Boolean hypercube.
For total Boolean functions, such large quantum-classical gaps
are not possible. In a seminal paper, Beals et al.~\cite{beals-et-al01quantum-by-polynomials}
prove that the bounded-error quantum query complexity of a total function
$f$ is always polynomially related to the randomized query complexity
of $f$. A natural question to ask is how large this polynomial gap
can be. Grover's search~\cite{grover96search} shows that the $n$-bit
OR function has bounded-error quantum query complexity $\Theta(\sqrt{n})$
and randomized complexity $\Theta(n).$ For a long time, this quadratic
separation was believed to be the largest possible. In a surprising
result, Aaronson et al.~\cite{ABK16cheat} proved the existence of
a total function $f$ with $R_{1/3}(f)=\tilde{\Omega}(Q_{1/3}(f)^{2.5}).$
This was improved by Tal~\cite{Tal19Query} to $R_{1/3}(f)\geq Q_{1/3}(f)^{8/3-o(1)}.$
We give a polynomially stronger separation:
\begin{theorem}
\label{thm:MAIN-total}There is a function $f\colon\pomon\to\zoo$
with
\[
R_{1/3}(f)\geq Q_{1/3}(f)^{3-o(1)}.
\]
\end{theorem}

Theorem~\ref{thm:MAIN-total} follows immediately by combining
our Corollary~\ref{cor:MAIN-near-linear-separation} with the ``cheatsheet''
framework of Aaronson et al.~\cite{ABK16cheat}. Specifically, they
prove that any partial function $f$ on $n$ bits that exhibits an
$n^{o(1)}$ versus $n^{1-o(1)}$ separation for bounded-error quantum
versus randomized query complexity, can be automatically converted
into a total function with $R_{1/3}(f)\geq Q_{1/3}(f)^{3-o(1)}.$
A recent paper of Aaronson et al.~\cite{ABKT20} conjectures that
$R_{1/3}(f)=O(Q_{1/3}(f)^{3})$ for every total function $f,$ which
would mean that our separation in Theorem~\ref{thm:MAIN-total} is
essentially optimal. The best current upper bound is $R_{1/3}(f)=O(Q_{1/3}(f)^{4})$
due to~\cite{ABKT20}, derived there from the breakthrough result
of Huang~\cite{huang2019induced} on the sensitivity conjecture.

\subsection{Our results: separations for communication complexity}

Via standard reductions, our quantum-classical query separations
imply analogous separations for communication complexity. In more
detail, let $f$ be a (possibly partial) Boolean function on $\pomon.$
For any communication problem $g\colon\pomo^{m}\times\pomo^{m}\to\pomo,$
we let $f\circ g$ stand for the (possibly partial) communication problem
on $(\pomo^{m})^{n}\times(\pomo^{m})^{n}$ given by $(f\circ g)(x,y)=f(g(x_{1},y_{1}),g(x_{2},y_{2}),\ldots,g(x_{n},y_{n})).$
Buhrman, Cleve, and Wigderson~\cite{bcw98quantum} proved that any
quantum query algorithm for $f$ gives a quantum communication protocol
for $f\circ g$ with the same error and approximately the same cost.
Quantitatively, 
\begin{equation}
Q_{\epsilon}^{\text{cc}}(f\circ g)\leq Q_{\epsilon}(f)\cdot O(m+\log n),\label{eq:Qcc-Q}
\end{equation}
where $Q_{\epsilon}^{\text{cc}}$ denotes $\epsilon$-error quantum
communication complexity. Reversing this inequality has seen a great
deal of work, mainly in the classical setting. A well-studied function
$g$ in this line of research is the inner product function $\IP_{m}\colon\pomo^{m}\times\pomo^{m}\to\pomo,$
given by $\IP_{m}(u,v)=\bigoplus_{i=1}^{m}(u_{i}\wedge v_{i}).$ In
particular, Chattopadhyay, Filmus, Koroth, Meir, and Pitassi~\cite[Theorem~1]{cfkmp19lifting-randomized-with-IP-gadget}
prove that
\begin{equation}
R_{1/3}^{\text{cc}}(f\circ\IP_{c\log n})=\Omega(R_{1/3}(f)\log n)\label{eq:Rcc-R}
\end{equation}
for every (possibly partial) function $f$ on $\pomo^{n},$ where
$R_{\epsilon}^{\text{cc}}$ denotes $\epsilon$-error randomized communication
complexity and $c>1$ is an absolute constant. In light of this connection
between query complexity and communication complexity, our main results
have the following consequences.
\begin{theorem}
\label{thm:MAIN-bounded-separation-comm}Let $\epsilon>0$ be given.
Then there is a partial Boolean function $F$ on $\pomo^{N}\times\pomo^{N}$
with
\begin{align*}
Q_{1/3}^{\text{\emph{cc}}}(F) & =O(\log N),\\
R_{1/3}^{\text{\emph{cc}}}(F) & =\Omega(N^{1-\epsilon}).
\end{align*}
\end{theorem}

\begin{proof}
Take $f$ as in Corollary~\ref{cor:MAIN-bounded-separation} and
define $N=cn\log n$ and $F=f\circ\IP_{c\log n}.$ Then the communication
bounds follow from~(\ref{eq:Qcc-Q}) and~(\ref{eq:Rcc-R}), respectively.
\end{proof}
\noindent Theorem~\ref{thm:MAIN-bounded-separation-comm} is essentially
optimal and a polynomial improvement on previous work. The best previous
quantum-classical separation for communication complexity was $O(\log N)$
versus $\Omega(N^{2/3-\epsilon})$, implicit in Tal~\cite{Tal19Query}
and preceded in turn by other exponential separations~\cite{raz99quantum-classical,regev-klartag11quantum-vs-classical,gavinsky20quantum-vs-classical-comm}.
Similarly, our Corollary~\ref{cor:MAIN-near-linear-separation} translates
in a black-box manner to communication complexity:
\begin{theorem}
\label{thm:MAIN-near-linear-separation-communication}Let $\alpha\colon\NN\to\NN$
be any monotone function with $\alpha=\omega(1).$ Then there is a
partial Boolean function $F$ on $\pomo^{N}\times\pomo^{N}$ with
\begin{align*}
Q_{1/3}^{\text{\emph{cc}}}(F) & \leq\alpha(N)\log N,\\
R_{1/3}^{\text{\emph{cc}}}(F) & \geq N^{1-o(1)}.
\end{align*}
\end{theorem}

\begin{proof}
Take $f$ as in Corollary~\ref{cor:MAIN-near-linear-separation}
and define $N=cn\log n$ and $F=f\circ\IP_{c\log n}.$ Then the communication
bounds follow from~(\ref{eq:Qcc-Q}) and~(\ref{eq:Rcc-R}), respectively.
\end{proof}
Finally, we obtain the following result for \emph{total} functions.
\begin{theorem}
\label{thm:MAIN-total-comm}There is a function $F\colon\pomo^{N}\times\pomo^{N}\to\zoo$
with
\[
R_{1/3}^{\text{\emph{cc}}}(F)\geq Q_{1/3}^{\text{\emph{cc}}}(F)^{3-o(1)}.
\]
\end{theorem}

\begin{proof}
The cheatsheet framework~\cite{ABK16cheat} ensures that the quantum
and classical query complexities of $f$ in Theorem~\ref{thm:MAIN-total}
are polynomial in the number of variables $n.$ With this in mind,
we proceed as before, setting $N=cn\log n$ and $F=f\circ\IP_{c\log n}$
and applying~(\ref{eq:Qcc-Q}) and~(\ref{eq:Rcc-R}).
\end{proof}
\noindent Again, Theorem~\ref{thm:MAIN-total-comm} is a polynomial
improvement on previous work, the best previous result being a power
of $8/3$ separation implicit in~\cite{Tal19Query}.

\subsection{Our results: Fourier weight of decision trees}

It is straightforward to verify that a uniformly random input $x\in(\pomon)^{k}$
is with high probability a \emph{negative} instance of the rorrelation
problem $f_{n,k,U}$. With this in mind, Tal~\cite{Tal19Query} proves
his lower bound for rorrelation by constructing a probability distribution
$\Dcal_{n,k,U}$ that generates \emph{positive} instances of $f_{n,k,U}$
with nontrivial probability yet is indistinguishable from the uniform
distribution by a decision tree $T$ of cost $n^{2/3-O(1/k)}$. His
notion of indistinguishability is based on the Fourier spectrum. Specifically,
Tal~\cite{Tal19Query} shows that: (i) the \emph{sum} of the absolute
values of the Fourier coefficients of $T$ of a given order $\ell$
does not grow too fast with $\ell$; and (ii) the \emph{maximum} Fourier
coefficient of $\Dcal_{n,k,U}$ of order $\ell$ decays exponentially
fast with $\ell$. In Tal's paper, the bound for (ii) is essentially
optimal, whereas the bound for (i) is far from tight. The sum of the
absolute values of the order-$\ell$ Fourier coefficients of a decision
tree $T$, which we refer to as the \emph{$\ell$-Fourier weight of
$T$}, is shown in~\cite{Tal19Query} to be at most
\begin{equation}
c^{\ell}\sqrt{d^{\ell}(1+\log kn)^{\ell-1},}\label{eq:tals-bound-fourier-weight}
\end{equation}
where $d$ is the depth of the tree and $c\geq1$ is an absolute constant.
This bound is strong for any constant $\ell$ but degrades rapidly
as $\ell$ grows. In particular, for $\ell=\sqrt{d}$ already, (\ref{eq:tals-bound-fourier-weight})
is weaker than the trivial bound $\binom{d}{\ell}.$ This is a major
obstacle since the indistinguishability proof requires strong bounds
for every $\ell$. This obstacle is the reason why Tal's analysis
gives the randomized query lower bound $n^{2/3-O(1/k)}$ as opposed
to the optimal $\tilde{\Omega}(n^{1-1/k}).$ Tal conjectured that
the $\ell$-Fourier weight of a depth-$d$ decision tree is in fact
bounded by $c^{\ell}\sqrt{\binom{d}{\ell}(1+\log kn)^{\ell-1}},$
which is a factor of $\sqrt{\ell!}$ improvement on~(\ref{eq:tals-bound-fourier-weight})
and essentially optimal. We prove his conjecture:
\begin{theorem}
\label{thm:MAIN-Fourier-bound}Let $T\colon\pomon\to\{0,1\}$ be a
function computable by a decision tree of depth $d.$ Then
\begin{align*}
\sum_{\substack{S\subseteq\oneton:\\
|S|=\ell
}
}|\hat{T}(S)| & \leq c^{\ell}\sqrt{\binom{d}{\ell}(1+\log n)^{\ell-1}}, &  & \ell=1,2,\ldots,n,
\end{align*}
where $c\geq1$ is an absolute constant.
\end{theorem}

\noindent It is well known and easy to show that Theorem~\ref{thm:MAIN-Fourier-bound}
is essentially tight, even for \emph{nonadaptive }decision trees~\cite[Theorem~5.19]{odonnell14boolean-fuction-analysis}.
The actual statement that we prove is more precise and takes into
account the density parameter $\Prob[T(x)\ne0]$; see Theorem~\ref{thm:main-fourier-DT-p}
for details. With Theorem~\ref{thm:MAIN-Fourier-bound} in hand,
all our main results (Theorem~\ref{thm:MAIN-randomized-lower-bound}
and its corollaries) follow immediately by combining the new bound
on the Fourier weight of decision trees with Tal's near-optimal bounds
on the individual Fourier coefficients of $\Dcal_{n,k,U}.$

Theorem~\ref{thm:MAIN-Fourier-bound} is of interest in its own right,
independent of its use in this paper to obtain optimal quantum-classical
separations. The study of the Fourier spectrum has a variety of applications
in theoretical computer science, including circuit complexity, learning
theory, pseudorandom generators, and quantum computing. Even prior
to Tal's work, the $\ell$-Fourier weight of decision trees was studied
for $\ell=1$ by O'Donnell and Servedio~\cite{OS07}, who proved
the tight $O(\sqrt{d})$ bound and used it to give a polynomial-time
learning algorithm for monotone decision trees. Fourier weight has
been studied for various other classes of Boolean functions, including
bounded-depth circuits, branching programs, low-degree polynomials
over finite fields, and functions with bounded sensitivity; 
see~\cite{GSW16sensitivity,SVW17,Tal17ac0,CHRT18ROBP,CHHL18walk,BIJLSV21f2poly,LPV22branchingprogram}
and the references therein.

\subsection{Limitations of previous analyses}

In this part, we overview Tal's bound on the $\ell$-Fourier weight
of decision trees. To build intuition, it is helpful to first examine
the case $\ell=1$, due to O'Donnell and Servedio~\cite{OS07} and
Tal~\cite{Tal19Query}. For simplicity, consider a perfect tree $T$
of depth $d$ with leaves labeled $0$ and $1,$ where the $i$-th
variable queried in each path is $x_{i}.$ Throughout this discussion,
we identify a decision tree with the function that it computes, and
use the same variable for both. By negating the variables if necessary,
we may assume that $\hat{T}(i)\ge0$. In particular,
\[
\sum_{i=1}^{n}|\hat{T}(i)|=\Exp_{x}\left[T(x)\sum_{i=1}^{d}x_{i}\right].
\]
This gives a new perspective on $\sum|\hat{T}(i)|$ in terms of the
random experiment whereby one picks a random root-to-leaf path, sums
all the variables in that path, and multiplies the result by the label
of the leaf. The expected value of this experiment equals $\sum|\hat{T}(i)|.$
It is clear that this value is maximized when the leaves labeled $1$
correspond to paths with large sums. With this observation~\cite{Tal19Query},
one can prove that
\begin{equation}
\sum_{i=1}^{n}|\hat{T}(i)|=O\left(p\sqrt{d\ln\frac{e}{p}}\right),\label{eq:intro-Tal-level-1-ineq}
\end{equation}
where $p=\Prob[T(x)\ne0]$ is the fraction of nonzero leaves, which
we refer to as the \emph{density of $T$}. By linearity, the same
argument applies even to adaptive trees.

Tal's analysis for $\ell\geq2$ is a natural inductive generalization
of the above argument. Let $T$ be an arbitrary tree in variables
$x_{1},x_{2},\ldots,x_{n}.$ Let $V_{i}$ denote the set of internal
nodes in $T$ labeled by the variable $x_{i}$. The key notion is
that of the \emph{contraction of $T$ with respect to $x_{i},$} which
is a tree denoted by $T_{i}$ with real-valued labels at the leaves.
This tree $T_{i}$ is formed by the following two-step process: (i)~for
each path that does not query $x_{i}$, set the leaf label to $0;$
and (ii) for each $v\in V_{i}$, replace the subtree $T_{v}$ rooted
at $v$ by a single leaf labeled by the Fourier coefficient $\hat{T}_{v}(i).$
The $n$ contractions of $T$ give rise to the decomposition
\begin{equation}
\sum_{|S|=\ell}|\hat{T}(S)|\le\sum_{i=1}^{n}\sum_{|S|=\ell-1}|\hat{T}_{i}(S)|,\label{eq:intro-Tal-bridge-inductive-hypothesis}
\end{equation}
which is the foundation of Tal's inductive argument. The real-valued
labels of the $T_{i}$ present no difficulty since one can replace
each such label by its binary expansion and thus write $T_{i}$ as
a linear combination of trees with binary labels. The key parameter
in Tal's inductive proof is density, and it needs to be maintained
carefully for each of the trees involved. Since the contractions of
$T$ can overlap in complicated ways, it becomes increasingly difficult
to accurately keep track of the densities. This translates into progressively
larger losses at each step of the inductive argument. Cumulatively,
the argument incurs an extraneous factor of $\sqrt{\ell!}$ in the
final bound. Despite considerable efforts, we were not able to find
a way forward within this framework.

\subsection{Our approach}

To obtain the near-optimal bound in Theorem~\ref{thm:MAIN-Fourier-bound},
we adopt a completely different approach. At a high level, we partition
$\sum_{|S|=\ell}|\hat{T}(S)|$ into well-structured parts. We discuss
the partitioning strategy first, and then our analysis of each part
in the partition.

\subsubsection*{The partition}

Let $T$ be a perfect tree of depth $d.$ We think of the vertices
at any given depth as forming a \emph{layer}, and we number the layers
of $T$ consecutively $1$ through $d.$ Consider a grouping of the
layers into $\ell$ disjoint blocks $I_{1},I_{2},\ldots,I_{\ell}\subseteq\{1,2,\ldots,d\}$,
where each block consists of consecutive layers from $T$, and the
union $I_{1}\cup I_{2}\cup\cdots\cup I_{\ell}$ may be a proper subset
of $\{1,2,\ldots,d\}$. The numbering of these blocks
corresponds to the ordering of the elements,  i.e., the
elements of $I_1$ are all less than the elements of
$I_2,$ which are in turn less than the elements of
$I_3,$ and so on.
As a canonical example, we could partition
the layers into $\ell$ blocks of roughly equal size. Viewed as a
function, $T$ is the sum of the characteristic functions of the root-to-leaf
paths, each such path weighted by the corresponding leaf. If one alters
this sum by keeping, for each path, only those Fourier coefficients
that have exactly one variable in each block, the result is a real-valued
function which we denote by $T|_{I_{1}*I_{2}*\cdots*I_{\ell}}.$ Here
we define $I_{1}\ast I_{2}\ast\cdots\ast I_{\ell}=\{S\in\binom{[d]}{\ell}:|S\cap I_{i}|=1\text{ for each }i\}$,
and we refer to any such family of sets in $\binom{[d]}{\ell}$ as
an \emph{elementary family.} Our challenge is to find an efficient
partition of $\binom{[d]}{\ell}$ into elementary families $\Ecal_{1},\Ecal_{2},\ldots,\Ecal_{N}$.
Then
\begin{equation}
T|_{\binom{[d]}{\ell}}=\sum_{i=1}^{N}T|_{\Ecal_{i}},\label{eq:partition-fourier-spectrum}
\end{equation}
and we can bound the Fourier weight of the degree-$\ell$ homogeneous
part of $T$ by bounding that of $T|_{\Ecal_{i}}$ for each $i$.
For the proof of Theorem~\ref{thm:MAIN-Fourier-bound}, we need a
partition that achieves
\begin{equation}
\sum_{i=1}^{N}\sqrt{|\Ecal_{i}|}\le C^{\ell}\sqrt{\binom{d}{\ell}}\label{eq:intro-efficient-partition}
\end{equation}
for an absolute constant $C\geq1.$ Such a partition would be essentially
extremal due to the trivial lower bound $\sum\sqrt{|\Ecal_{i}|}\geq\sqrt{\sum|\Ecal_{i}|}=\binom{d}{\ell}^{1/2}$
for every partition of $\binom{[d]}{\ell}$. Unfortunately, with elementary
families defined as above, such a partition does not exist! For the
sake of simplicity, we ignore this complication altogether in the
remainder of this discussion. In the actual proof, we resolve this
issue by allowing elementary families to contain up to two variables
per block. This makes the rest of the proof more delicate, but still
suffices for the purposes of proving Theorem~\ref{thm:MAIN-Fourier-bound}.
We give a first-principles combinatorial construction of a partition
with~(\ref{eq:intro-efficient-partition}) in Section~\ref{sec:Elementary-set-families}.

\subsubsection*{Analysis of individual parts}

For any elementary family $\Ecal,$
we prove that $T|_{\Ecal}$ has $\ell$-Fourier weight
\begin{equation}
\sqrt{|\Ecal|\cdot O(\log n)^{\ell-1}}.\label{eq:intro-level-k-ineq}
\end{equation}
Along with~(\ref{eq:partition-fourier-spectrum}) and~(\ref{eq:intro-efficient-partition}),
this implies Theorem~\ref{thm:MAIN-Fourier-bound}. 
Indeed, applying the bound just claimed to each summand in
(\ref{eq:partition-fourier-spectrum}) shows that the decision tree has $\ell$-Fourier weight
$\sum_{i=1}^N \sqrt{|\mathcal{E}_i|\cdot O(\log n)^{\ell-1}},$ which
by~(\ref{eq:intro-efficient-partition}) is at most 
$C^\ell \sqrt{\binom{d}{\ell}\cdot O(\log n)^{\ell-1}}.$

In this overview, we focus on proving (\ref{eq:intro-level-k-ineq}) 
for the special case
$\Ecal=I_{1}\ast I_{2}\ast\cdots\ast I_{\ell}$
with
\[
|I_{1}|=|I_{2}|=\cdots=|I_{\ell}|=\frac{d}{\ell}.
\]
Our bound (\ref{eq:intro-level-k-ineq}) uses a generalization of
decision trees where the leaves can be labeled by polynomials. With
this generalization, we can further define tree addition, as well
as tree multiplication by polynomials. This provides a powerful framework
for decomposing trees and expressing them as conical combinations
of simpler trees. To see how this generalization comes into play,
consider the subtree $T_{v}$ rooted at some node $v$ in the \emph{first}
layer of $I_{\ell}$. By the structure of $T|_{\Ecal},$ the only
relevant aspect of $T_{v}$ is its degree-1 homogeneous part. Therefore,
$T_{v}$ can be \emph{replaced} with its degree-$1$ homogeneous part.
Now, let $T'$ be the decision tree obtained by contracting every
node $v$ in the first layer of $I_{\ell}$ into a leaf labeled by
the polynomial $\sum_{i=1}^{n}\hat{T}_{v}(i)x_{i}.$ We show that
analyzing the Fourier weight of $T|_{I_{1}\ast I_{2}\ast\cdots\ast I_{\ell}}$
is equivalent to analyzing that of $T'$ with respect to the smaller
elementary family $I_{1}\ast I_{2}\ast\cdots\ast I_{\ell-1}.$ The
latter is a delicate task, and our solution involves three stages.
\begin{enumerate} \item \label{three-part-program}
In the first stage, we group leaves $v$ in $T'$ according to the
density $\alpha_{v}$ of the original subtree $T_{v}$. Applying Tal's argument~\cite{Tal19Query}, we have
\[
\sum_{i=1}^{n}|\hat{T}_{v}(i)|\le c'\alpha_{v}\sqrt{\frac{d}{\ell}\ln\frac{e}{\alpha_{v}}}
\]
for some constant $c'\geq1.$ We decompose $T'=\sum_{j=0}^{\infty}T'_{j}$,
where $T_{j}'$ keeps a leaf $v$ if $\alpha_{v}\in(3^{-j-1},3^{-j}]$
and replaces it with 0 otherwise.
\item In the second stage, we further decompose $T'_{j}$ as follows. Let
$\beta_{j}$ be the fraction of nonzero leaves in $T_{j}',$ and let
$m$ be the maximum Fourier weight of a nonzero leaf $v$ of $T'_{j}.$
We then express $T_{j}'$ as the conical combination $T'_{j}=\sum_{r=1}^{\infty}c_{r}T'_{j,r}$
such that: $\sum c_{r}=m$; each nonzero leaf of $T_{j,r}'$ is labeled
with some variable or its negation; and the fraction of nonzero leaves
in each $T_{j,r}'$ is $\beta_{j}$.
\item In the final stage, we decompose $T'_{j,r}$ into $n$ different trees
according to the $n$ variables: $T'_{j,r}=\sum_{i=1}^{n}T'_{j,r,i}\cdot x_{i}.$
The tree $T'_{j,r,i}$ keeps only those leaves $v$ that are labeled
by $\pm x_{i}$, and the new label is exactly the sign of the variable
$x_{i}$. Now $T'_{j,r,i}:\{-1,1\}^{n}\to\{-1,0,1\}$ has density
$\beta_{j}/n$ on average, and $T'_{j,r,i}|_{I_{1}\ast I_{2}\ast\cdots\ast I_{\ell-1}}$
can be analyzed using the inductive hypothesis.
\end{enumerate}
Of the three stages, the first stage is the least natural but crucial.
To see this, let $\ell=2$ and consider the following extreme case:
for all nonzero leaves $v$ in $T'$, the densities $\alpha_{v}$
are equal, $\alpha_{v}=\alpha$. Let $p$ denote the density of $T.$
Observe that $p$ is the product of $\alpha$ and the density of $T',$ which means that $T'$ has density $p/\alpha.$
There is some $j$ such that $T'=T_{j}'$, and that specific $T'_{j}$ has
density $p/\alpha$. Consequently, $T'_{j,r,i}$ has density $p/(n\alpha)$
on average. The $1$-Fourier weight of $T'_{j,r,i}$ for average $i$
can be bounded by 
\[
c'\cdot\frac{p}{n\alpha}\sqrt{\frac{d}{2}\ln\frac{en\alpha}{p}}.
\]
The Fourier weight of $T'|_{\{1,2,\ldots,d/2\}\ast\{d/2+1,d/2+2,\ldots,d\}}$
can then be bounded by 
\begin{align}
c'\cdot\alpha\sqrt{\frac{d}{2}\ln\frac{e}{\alpha}} & \cdot\sum_{i=1}^{n}c'\cdot\frac{p}{n\alpha}\sqrt{\frac{d}{2}\ln\frac{en\alpha}{p}}\nonumber \\
 & =(c')^{2}\cdot p\sqrt{\left(\frac{d}{2}\right)^{2}\ln\frac{e}{\alpha}\cdot\ln\frac{en\alpha}{p}}.\label{eq:intro-our-bound-level-2}
\end{align}
The corresponding bound for $\ell=2$ that Tal obtains is 
\[
O\left(p\sqrt{d^{2}\ln\frac{e}{p}\cdot\ln\frac{en}{p}}\right).
\]
Comparing it with our bound~(\ref{eq:intro-our-bound-level-2}) shows
that for $\alpha\gg p,$ our factor $\ln\frac{e}{\alpha}$ is substantially
smaller than Tal's corresponding factor $\ln\frac{e}{p}$; while for
$\alpha$ close to $p$, our factor $\ln\frac{en\alpha}{p}$ is substantially
smaller than Tal's $\ln\frac{en}{p}$. For
$\ell=\omega(1)$, the savings become significant. This is the intuitive reason
why the first stage allows us to avoid the $\sqrt{\ell!}$ loss. Its
surprising power comes from the framework of elementary families set
up at the beginning of the proof.

Our complete analysis of the Fourier weight of decision
trees is presented in Section~\ref{sec:fourier-main}.
Sections~\ref{subsec:Slicing-the-tree}
and~\ref{subsec:analytic-preliminaries} supply
Fourier-theoretic and analytic preliminaries.
In Section~\ref{subsec:Contiguous-intervals}, we study
the Fourier weight of decision trees with respect to
elementary families of special form, as in the proof
overview above. These results are generalized to
arbitrary elementary families in
Section~\ref{subsec:Generalization-to-elementary}.  Our
main result on the Fourier weight of decision trees is
then established in Section~\ref{subsec:Main-result}.
The concluding Section~\ref{sec:quantum} leverages
these contributions to establish our main results on
quantum versus classical query complexity.

\subsection{Independent work by Bansal and Sinha}

Independently and concurrently with our work, Bansal and Sinha~\cite{bansal-sinha20forrelation}
also obtained an optimal, $\lceil k/2\rceil$ versus $\tilde{\Omega}(n^{1-1/k})$
separation of quantum and randomized query complexity. Their result
uses completely different techniques and is incomparable with ours.
In more detail, Bansal and Sinha~\cite{bansal-sinha20forrelation}
construct a function $f$ with randomized query complexity
\begin{align}
R_{\frac{1}{2}-\gamma}(f) & =\Omega\left(\frac{\gamma^{2}}{k^{29}}\cdot\left(\frac{n}{\log(k+n)}\right)^{1-\frac{1}{k}}\right), &  & \forall\gamma\in[0,1/2].\label{eq:bansal-sinha}
\end{align}
This is essentially the same as our lower bound on randomized query
complexity (Theorem~\ref{thm:MAIN-randomized-lower-bound}):
\begin{align*}
R_{\frac{1}{2}-\gamma}(f_{n,k,U}) & =\Omega\left(\frac{\gamma^{2}}{k}\cdot\frac{n^{1-\frac{1}{k}}}{(\log n)^{2-\frac{1}{k}}}\right), & \forall\gamma\in[0,1/2].
\end{align*}
In both cases, the function in question has a quantum query algorithm
with cost $\lceil k/2\rceil$ and error $\frac{1}{2}-2^{-\Theta(k)}.$
In particular, for an arbitrary constant $k\geq1,$ the bounded-error
quantum query complexity is at most $\lceil k/2\rceil.$ (The original
version of~\cite{bansal-sinha20forrelation}, released concurrently
with our paper, had a poorer error parameter: $\frac{1}{2}-(\log n)^{-\Theta(k)}.$
But the authors of~\cite{bansal-sinha20forrelation} were able to
improve it several weeks later to match our error parameter, $\frac{1}{2}-2^{-\Theta(k)}.)$

The two approaches have incomparable strengths. To start with, Bansal
and Sinha~\cite{bansal-sinha20forrelation} prove their lower bound
for an \emph{explicit} function $f$ (namely, the forrelation and
rorrelation problems with a properly chosen gap parameter), as opposed
to the uniformly random choice of $f_{n,k,U}$ in this paper.

On the other hand, our analysis has the advantage of determining the
$\ell$-Fourier weight of decision trees. This result is of independent
interest beyond quantum computing, given the numerous recent applications
of Fourier weight to learning theory and pseudorandom generators.
We believe that our techniques may be relevant to other unresolved
questions on the Fourier spectrum of Boolean functions. The work in~\cite{bansal-sinha20forrelation},
by contrast, does not imply any improved bounds on Fourier weight.

Another strength of our analysis is methodological. The proof in~\cite{bansal-sinha20forrelation}
uses advanced analytic machinery, whereas our approach is elementary
and self-contained. Indeed, the only analytic fact used in this paper
and Tal~\cite{Tal19Query} is the p.d.f.~of the multivariate normal
distribution. With this simple toolkit, we obtain all the same optimal
quantum-classical separations for query complexity and communication
complexity as in~\cite{bansal-sinha20forrelation}.

\subsection{Follow-up work and future directions}

Of the work subsequent to
our paper, the most relevant result is due to Girish,
Tal, and Wu~\cite{GTW21parity}.  Those authors prove an
upper bound of $d^{\ell/2}\cdot O(\ell\log n)^\ell$ on
the $\ell$-Fourier weight of any \emph{parity} decision
tree of depth $d$ in $n$ variables.  
Plugging this bound into the machinery of 
Bansal and
Sinha~\cite{bansal-sinha20forrelation}, Girish et al.~obtain
a separation of $t$ versus 
$\tilde\Omega(n^{1-\frac1{2t}})$ for quantum query complexity
versus 
randomized \emph{parity} decision tree complexity, for
any constant $t\geq1.$ 
To compare their results with the
corresponding contributions of our work
(Corollary~\ref{cor:bounded-error-t-separation} and
Theorem~\ref{thm:MAIN-Fourier-bound}),
the \emph{parity} decision tree model of Girish et al.\ 
is more powerful than the standard randomized decision trees that we
consider. On the other hand, the Fourier weight bound
of~\cite{GTW21parity} deteriorates rapidly with $\ell$
and is not known to be tight beyond $\ell=O(1).$ Recall
that our Fourier weight bound in Theorem~\ref{thm:MAIN-Fourier-bound}
is essentially optimal for every $\ell.$ There are
methodological differences as well. For example, the
quantum-classical separation in~\cite{GTW21parity}
relies on the advanced machinery of Bansal and
Sinha~\cite{bansal-sinha20forrelation},
whereas our separation does not.

We close this section with a direction for future work.
In our separation of quantum versus classical query complexity
(Corollary~\ref{cor:bounded-error-t-separation}), the classical algorithm needs
$\tilde{\Omega}(n^{1-\frac{1}{2t}})$ queries to solve the problem with bounded
error, whereas the quantum algorithm makes precisely $t$ queries and succeeds
with probability $\frac{1}{2}+2^{-O(t)}.$ The same applies to the
quantum-classical query separation due to Bansal and
Sinha~\cite{bansal-sinha20forrelation} and the
follow-up separation of Girish et al.~\cite{GTW21parity}. In these results, the quantum 
algorithms conform to the bounded-error regime only for constant $t.$ A natural
open problem is to obtain an optimal separation in the bounded-error regime
for all $t=\omega(1).$

\section{Preliminaries}

\subsection{General notation}

There are two common arithmetic encodings for the Boolean values:
the traditional encoding $\text{false}\leftrightarrow0,\;\text{true}\leftrightarrow1,$
and the Fourier-motivated encoding $\text{false}\leftrightarrow1,\;\text{true}\leftrightarrow-1.$
Throughout this manuscript, we use the former encoding for the range
of a Boolean function and the latter for the domain. With this convention,
Boolean functions are mappings $\pomon\to\zoo$ for some~$n.$

We denote the empty string as usual by $\varepsilon.$ For an alphabet
$\Sigma$ and a natural number $n,$ we let $\Sigma^{\leq n}$ denote
the set of all strings over $\Sigma$ of length up to $n$, so that
$\Sigma^{\leq n}=\{\varepsilon\}\cup\Sigma\cup\Sigma^{2}\cup\cdots\cup\Sigma^{n}.$
For a string $v$ over a given alphabet, we let $|v|$ denote the
length of $v.$ For a set $S,$ we let $v|_{S}$ denote the substring
of $v$ indexed by the elements of $S.$ In other words, $v|_{S}=v_{i_{1}}v_{i_{2}}\cdots v_{i_{|S|}}$
where $i_{1}<i_{2}<\cdots<i_{|S|}$ are the elements of $S.$ In the
same spirit, we define $v_{\leq i}=v_{1}v_{2}\ldots v_{i}.$

The power set of a set $S$ is denoted by $\Pcal(S).$ For a set $S$
and a nonnegative integer $k,$ we let $\binom{S}{k}$ denote the
family of subsets of $S$ that have cardinality exactly~$k$:
\[
\binom{S}{k}=\{S'\subseteq S:|S'|=k\}.
\]
We further define
\[
\Pcal_{n,k}=\binom{\{1,2,\ldots,n\}}{k}=\{S\subseteq\{1,2,\ldots,n\}:|S|=k\}.
\]
The following well-known bound~\cite[Proposition~1.4]{jukna11extremal-2nd-edition}
is used in our proofs without further mention: 
\begin{align}
\left(\frac{n}{k}\right)^{k}\leq{n \choose k}\leq\left(\frac{\e n}{k}\right)^{k}, &  & k=1,2,\dots,n,\label{eq:entropy-bound-binomial}
\end{align}
where $\e=2.7182\ldots$ denotes Euler's number.

We adopt the standard notation $\NN=\{0,1,2,3,\ldots\}$ and $\ZZ^{+}=\{1,2,3,\ldots\}$
for the sets of natural numbers and positive integers, respectively.
We adopt the extended real number system $\Re\cup\{-\infty,\infty\}$
in all calculations. The functions $\ln x$ and $\log x$ stand for
the natural logarithm of $x$ and the logarithm of $x$ to base $2,$
respectively. To avoid excessive use of parentheses, we follow the
notational convention that $\ln a_{1}a_{2}\ldots a_{k}=\ln(a_{1}a_{2}\ldots a_{k})$
for any factors $a_{1},a_{2},\ldots,a_{k}.$ The binary entropy function
$H\colon[0,1]\to[0,1]$ is given by
\[
H(x)=x\log\frac{1}{x}+(1-x)\log\frac{1}{1-x}.
\]
Basic calculus reveals that
\begin{equation}
H(x)\leq1-\frac{2}{\ln2}\left(x-\frac{1}{2}\right)^{2}.\label{eq:entropy-upper-bound}
\end{equation}
For nonempty sets $A,B\subseteq\Re,$ we write $A<B$ to mean that
$a<b$ for all $a\in A,\,b\in B.$ It is clear that this relation
is a partial order on nonempty subsets of $\Re.$ We use the standard
definition of the sign function:
\[
\sign x=\begin{cases}
-1 & \text{if }x<0,\\
0 & \text{if }x=0,\\
1 & \text{if }x>0.
\end{cases}
\]
For a finite set $X$, we let $\Re^{X}$ denote the family of real-valued
functions on $X.$ For $f,g\in\Re^{X},$ we let $f\cdot g\in\Re^{X}$
denote the pointwise product of $f$ and $g,$ with $(f\cdot g)(x)=f(x)g(x).$
We use the standard inner product $\langle f,g\rangle=\sum_{x\in X}f(x)g(x).$

\subsection{Fourier transform}
\label{sec:fourier}

Consider the real vector space of functions $\pomon\to\Re.$ For $S\subseteq\oneton,$
define $\chi_{S}\colon\pomon\to\pomo$ by $\chi_{S}(x)=\prod_{i\in S}x_{i}.$
Then 
\[
\langle\chi_{S},\chi_{T}\rangle=\begin{cases}
2^{n} & \text{if }S=T,\\
0 & \text{otherwise.}
\end{cases}
\]
Thus, $\{\chi_{S}\}_{S\subseteq\oneton}$ is an orthogonal basis for
the vector space in question. In particular, every function $\phi\colon\pomon\to\Re$
has a unique representation of the form 
\begin{align*}
\phi=\sum_{S\subseteq\oneton}\hat{\phi}(S)\chi_{S}
\end{align*}
for some reals $\hat{\phi}(S),$ where by orthogonality $\hat{\phi}(S)=2^{-n}\langle\phi,\chi_{S}\rangle$.
The reals $\hat{\phi}(S)$ are called the \emph{Fourier coefficients
of $\phi,$} and the mapping $\phi\mapsto\hat{\phi}$ is the \emph{Fourier
transform of $\phi.$} Put another way, every function $\phi:\pomon\to\Re$
has a unique representation as a multilinear polynomial 
\begin{equation}
\phi(x)=\sum_{S\subseteq\oneton}\hat{\phi}(S)\prod_{i\in S}x_{i},\label{eq:phi-multilinear}
\end{equation}
where the real numbers $\hat{\phi}(S)$ are the Fourier coefficients
of $f.$ The \emph{order} of a Fourier coefficient $\hat{\phi}(S)$
is the cardinality $|S|$.

For $k=0,1,2,\ldots,n,$ we introduce the linear operator $L_{k}\colon\Re^{\pomo^{n}}\to\Re^{\pomo^{n}}$
that sends a function $\phi\colon\pomon\to\Re$ to the function $L_{k}\phi\colon\pomon\to\Re$
given by 
\[
(L_{k}\phi)(x)=\sum_{S\in\Pcal_{n,k}}\hat{\phi}(S)\chi_{S}(x).
\]
We refer to $L_{k}\phi$ as the \emph{degree-$k$ homogeneous part
of $\phi.$}

For any polynomial $p\in\Re[x_{1},x_{2},\ldots,x_{n}],$ we let $\norm p$
denote the sum of the absolute values of the coefficients of $p.$
One easily verifies the well-known fact that $\norm\cdot$ is a norm
on the polynomial ring $\Re[x_{1},x_{2},\ldots,x_{n}].$ We identify
a function $\phi\colon\pomon\to\Re$ with its unique representation~(\ref{eq:phi-multilinear})
as a multilinear polynomial, to the effect that 
\[
\norm\phi=\sum_{S\subseteq\oneton}|\hat{\phi}(S)|
\]
is the sum of the absolute values of the Fourier coefficients of $\phi.$
\begin{proposition}
\label{prop:fourier-norm-convex}For any functions $\phi,\psi\colon\pomon\to\Re$
and reals $a,b,$
\begin{align*}
\norm{a\phi+b\psi}\leq|a|\,\norm\phi+|b|\,\norm\psi.
\end{align*}
\end{proposition}

\begin{proof}
We have
\begin{align*}
\norm{a\phi+b\psi} & =\sum_{S\subseteq\oneton}|a\hat{\phi}(S)+b\hat{\psi}(S)|\\
 & \leq|a|\sum_{S\subseteq\oneton}|\hat{\phi}(S)|+|b|\sum_{S\subseteq\oneton}|\hat{\psi}(S)|\\
 & =|a|\,\norm\phi+|b|\,\norm\psi,
\end{align*}
where the first step uses the linearity of the Fourier transform.
\end{proof}
We also note the following submultiplicative property.
\begin{proposition}
\label{prop:fourier-norm-submult}For any functions $\phi,\psi\colon\pomon\to\Re,$
\begin{align*}
\norm{\phi\cdot\psi}\leq\norm\phi\,\norm\psi.
\end{align*}
\end{proposition}

\begin{proof}
We have
\begin{align*}
\phi\cdot\psi & =\left(\sum_{S\subseteq\oneton}\hat{\phi}(S)\chi_{S}\right)\left(\sum_{T\subseteq\oneton}\hat{\psi}(T)\chi_{T}\right)\\
 & =\sum_{S,T\subseteq\oneton}\hat{\phi}(S)\hat{\psi}(T)\chi_{(S\setminus T)\cup(T\setminus S).}
\end{align*}
Applying Proposition~\ref{prop:fourier-norm-convex},
\begin{align*}
\norm{\phi\cdot\psi} & \leq\sum_{S,T\subseteq\oneton}|\hat{\phi}(S)|\,|\hat{\psi}(T)|.
\end{align*}
The right-hand side of this inequality is clearly $\norm\phi\,\norm\psi.$
\end{proof}
We will frequently use the norm $\norm\cdot$ in conjunction with
the operator $L_{k}$ to refer to the sum of the absolute values of
the Fourier coefficients of a given order $k$:
\[
\norm{L_{k}\,\phi}=\sum_{S\in\Pcal_{n,k}}|\hat{\phi}(S)|.
\]

\subsection{Generalized decision trees}
\label{sec:gen-d-t}

Throughout this manuscript, we assume decision trees to be perfect
binary trees, with each internal node having two children and all
leaves having the same depth. This convention is without loss of generality
since a decision tree computing a given function $f$ can be made
into a perfect binary tree for $f$ of the same depth, by querying
dummy variables as necessary. We denote the variables of a decision
tree by $x_{1},x_{2},\ldots,x_{n}\in\pomo,$ and identify the vertices
of a decision tree in the natural manner with strings in $\pomo^{*}.$
Thus, $\varepsilon$ denotes the root of the tree, and a string $v\in\pomo^{k}$
denotes the vertex at depth $k$ reached from the root by following
the path $v_{1}v_{2}\ldots v_{k}$. Formally, a \emph{decision tree}
of depth $d$ in Boolean variables $x_{1},x_{2},\ldots,x_{n}\in\pomo$
is a function $T$ on $\pomo^{\leq d}$ with the following two properties.
\begin{enumerate}
\item One has $T(v)\in\{1,2,\ldots,n\}$ for every $v\in\pomo^{\leq d-1}$,
with the interpretation that $T(v)$ is the index of the variable
queried at the internal node found by following the path $v=v_{1}v_{2}v_{3}\ldots$
from the root of the decision tree. We note that a variable cannot
be queried twice on the same path, and therefore the $d$ numbers
$T(\varepsilon),T(v_{1}),T(v_{1}v_{2}),\ldots,T(v_{1}v_{2}\ldots v_{d-1})$
are pairwise distinct for every $v\in\pomo^{d-1}$.
\item One has $T(v)\in\Re[x_{1},x_{2},\ldots,x_{n}]$ for every $v\in\pomo^{d}$,
with the interpretation that $T(v)$ is the label of the leaf reached
by following the path $v=v_{1}v_{2}\ldots v_{d}$ from the root of
the tree. Thus, every leaf is labeled with a real-valued polynomial
in the input variables $x_{1},x_{2},\ldots,x_{n}$. At a given leaf
$v\in\pomo^{d},$ the variables $x_{T(\varepsilon)},x_{T(v_{1})},\ldots,x_{T(v_{1}v_{2}\ldots v_{d-1})}$
have been queried and therefore have fixed values. For this reason,
we require $T(v)$ to be a real polynomial in variables other than
$x_{T(\varepsilon)},x_{T(v_{1})},\ldots,x_{T(v_{1}v_{2}\ldots v_{d-1})}$.
We refer to a leaf $v\in\pomo^{d}$ as a \emph{nonzero leaf} if $T(v)$
is not the zero polynomial. While we formally allow arbitrary real
polynomials, the identity $x_{i}^{2}=x_{i}$ effectively forces $T(v)$
for each $v\in\pomo^{d}$ to be multilinear.
\end{enumerate}
Our formalism generalizes the traditional notion of a decision tree,
where the leaf labels are restricted to the Boolean constants $0$
and $1$.
\begin{proposition}
\label{prop:tree-computation}Let $T$ be a given decision tree of
depth $d.$ Then the function $f\colon\pomon\to\Re$ computed by $T$
is given by
\begin{equation}
f(x)=\sum_{v\in\pomo^{d}}\;T(v)\cdot\prod_{i=1}^{d}\frac{1+v_{i}x_{T(v_{1}v_{2}\ldots v_{i-1})}}{2}.\label{eq:f-T}
\end{equation}
\end{proposition}

\noindent We emphasize that $T(v)$ in this expression is a polynomial
in $x_{1},x_{2},\ldots,x_{n}$ and not necessarily a constant value.
In fact, the norm $\norm{T(v)}$ for leaves $v$ is a prominent quantity
in this paper.
\begin{proof}
For an input $x\in\pomon$ and a leaf $v\in\pomo^{d},$ the product
\[
\prod_{i=1}^{d}\frac{1+v_{i}x_{T(v_{1}v_{2}\ldots v_{i-1})}}{2}
\]
evaluates to $1$ if the input $x$ reaches the leaf $v$ in $T$,
and evaluates to $0$ otherwise. Recall that any given input $x$
reaches precisely one leaf $v,$ and the output of the tree on $x$
is defined to be the corresponding polynomial $T(v)\in\Re[x_{1},x_{2},\ldots,x_{n}]$
evaluated at $x.$ Thus, (\ref{eq:f-T}) evaluates to $T(v)$ where
$v$ is the leaf reached by $x.$
\end{proof}
For a decision tree $T$ of depth $d,$ we let $\dns(T)$ denote the
fraction of leaves in $T$ with nonzero labels:
\[
\dns(T)=\Prob_{v\in\pomo^{d}}[T(v)\ne0].
\]
We refer to this quantity as the \emph{density} of $T$. Another important
complexity measure is the \emph{degree of $T,$} denoted $\deg(T)$
and defined as the maximum of the degrees of the polynomials $T(v)\in\Re[x_{1},x_{2},\ldots,x_{n}]$
for $v\in\pomo^{d}.$ Recall that the zero polynomial $0$ is considered
to have degree $-\infty.$ For an internal node $v\in\pomo^{\leq d-1},$
we let $T_{v}$ denote the subtree of $T$ rooted at $v.$ Thus, $T_{v}$
is the tree of depth $d-|v|$ given by $T_{v}(u)=T(vu)$ for all $u\in\pomo^{\leq d-|v|}$.
The following fact is straightforward and well-known.
\begin{fact}
\label{fact:dns-prob-nonzero}Let $T$ be a given decision tree of
degree at most~$0$. Let $f\colon\pomon\to\Re$ be the function computed
by $T$. Then 
\[
\Prob_{x\in\pomon}[f(x)\ne0]=\dns(T).
\]
\end{fact}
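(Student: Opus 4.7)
The plan is to combine two elementary observations: (a) the degree-$0$ hypothesis forces each leaf label $T(v)$ to be a real constant rather than a genuine polynomial in the remaining variables, so ``nonzero leaf'' means exactly ``leaf whose constant label is nonzero''; and (b) under the uniform distribution on $\pomon$, each leaf $v\in\pomo^d$ is reached by $x$ with probability exactly $2^{-d}$. Once both are in hand, partitioning $\pomon$ by the (unique) leaf reached gives $\Prob[f(x)\ne 0]$ as the fraction of nonzero leaves, which is $\dns(T)$ by definition.

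First I would unwind the formula for $f$ provided by Proposition~\ref{prop:tree-computation},
\[
f(x)=\sum_{v\in\pomo^{d}}\;T(v)\cdot\prod_{i=1}^{d}\frac{1+v_{i}x_{T(v_{1}v_{2}\ldots v_{i-1})}}{2},
\]
and observe that for each fixed $x$ the product vanishes for every leaf except the one leaf $v(x)$ actually traced out by $x$. Hence $f(x)=T(v(x))$, and since $\deg(T)\le 0$ the value $T(v(x))$ is a real constant (the leaf label). In particular $f(x)\ne 0$ if and only if the leaf reached by $x$ has a nonzero label.

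Second I would compute, for each leaf $v\in\pomo^d$, the probability that $x$ reaches $v$. The path to $v$ fixes the values of the $d$ variables queried along that path (which are pairwise distinct by definition of a decision tree), and leaves the other $n-d$ coordinates of $x$ unconstrained. Under the uniform distribution on $\pomon$, this event therefore has probability $2^{-d}$. Summing over the set $L$ of leaves with nonzero labels gives
\[
\Prob_{x\in\pomon}[f(x)\ne 0]=\sum_{v\in L}\Prob_{x}[v(x)=v]=\frac{|L|}{2^d}=\Prob_{v\in\pomo^{d}}[T(v)\ne 0]=\dns(T),
\]
which is the desired identity.

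There is no real obstacle here; the statement is essentially bookkeeping. The only subtle point worth flagging is that the degree hypothesis is genuinely used: without it, a leaf label could be a nonzero polynomial that vanishes on the particular sub-cube of inputs reaching that leaf, so ``nonzero leaf'' would not coincide with ``$f(x)\ne 0$'' and the equality would fail. This is why the fact is stated for degree-$0$ trees rather than general generalized decision trees.
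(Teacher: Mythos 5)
Your proof is correct and matches the paper's argument: both boil down to (a) a degree-$0$ leaf label is a constant, so $f(x)\neq 0$ iff $x$ reaches a nonzero leaf, and (b) each leaf of a perfect depth-$d$ tree is reached with probability exactly $2^{-d}$, so $\Prob[f(x)\neq0]$ equals the fraction of nonzero leaves. Your version is just more explicit (invoking Proposition~\ref{prop:tree-computation} and spelling out the partition by reached leaf), and your closing remark correctly identifies why the degree-$0$ hypothesis is essential — a higher-degree leaf polynomial could vanish on a strict nonempty subset of the inputs reaching that leaf, turning the equality into a strict inequality.
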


\begin{proof}
Let $d$ be the depth of $T$. Since $T$ is a perfect binary tree,
the fraction of inputs $x\in\pomon$ that reach any given leaf of
$T$ is exactly $2^{-d}$. Therefore, the probability that a random
input $x\in\pomon$ reaches a leaf with a nonzero label is precisely
the fraction of leaves with nonzero labels, which is by definition
$\dns(T)$.
\end{proof}
We will be working with special classes of trees described by several
parameters. Specifically, we let $\Tcal(n,d,p,k)$ denote the set
of all trees in $n$ Boolean variables $x_{1},x_{2},\ldots,x_{n}\in\pomo$
of depth $d$ and density $p$ such that for every leaf $v\in\pomo^{d}$,
the label $T(v)$ is either the zero polynomial $0$ or a homogeneous
multilinear polynomial of degree $k$. We further define $\Tcal^{*}(n,d,p,k)$
to be the set of all trees $T\in\Tcal(n,d,p,k)$ that have the additional
property that $T(v)\in\{0\}\cup\{\pm\prod_{i\in S}x_{i}\colon S\in\Pcal_{n,k}\}$
for every leaf $v\in\pomo^{d}.$ Thus, every nonzero leaf in a tree
$T\in\Tcal^{*}(n,d,p,k)$ is labeled with a signed monomial of degree
$k.$

The Fourier spectrum of decision trees has been studied in several
works, as discussed in the introduction. We will need the following
special case of a result due to Tal~\cite[Theorem~7.5]{Tal19Query}.
\begin{theorem}[Tal]
\label{thm:tal-depth-2} Let $f\colon\pomon\to\{-1,0,1\}$ be given,
$f\not\equiv0$. Define $p=\Prob_{x\in\pomon}[f(x)\ne0].$ Suppose
that $f$ can be computed by a depth-$d$ decision tree. Then
\begin{align*}
\norm{L_{1}\,f} & \leq\binom{d}{1}^{1/2}Cp\sqrt{\ln\frac{e}{p}},\\
\norm{L_{2}\,f} & \leq\binom{d}{2}^{1/2}C^{2}p\sqrt{\ln\frac{e}{p}}\sqrt{\ln\frac{en}{p}},
\end{align*}
where $C\geq1$ is an absolute constant.
\end{theorem}

\noindent Tal states his result for functions $f\colon\pomon\to\zoo$
rather than $f\colon\pomon\to\{-1,0,1\}.$ But Theorem~\ref{thm:tal-depth-2}
follows immediately by writing $f=f^{+}-f^{-}$, where $f^{+},f^{-}\colon\pomon\to\zoo$
are the positive and negative parts of $f,$ and applying Tal's result
separately to $f^{+}$ and $f^{-}.$

\section{\label{sec:Elementary-set-families}Elementary set families}

As explained in the introduction, we obtain our Fourier weight bound
by combining the Fourier coefficients of a decision tree into well-structured
groups and bounding the sum of the absolute values in each group.
In this section, we lay the combinatorial groundwork for this result
by proving that $\Pcal_{n,k}$ can be efficiently partitioned into
what we call ``elementary families.'' We start in Section~\ref{subsec:A-binomial-recurrence}
with some technical calculations. Section~\ref{subsec:The-partition-measure}
formally defines elementary families and studies the associated complexity
measure for representing general families as the disjoint union of
elementary parts. Finally, Section~\ref{subsec:An-efficient-partition}
proves that our family of interest $\Pcal_{n,k}$ has an efficient
partition of this form.

\subsection{\label{subsec:A-binomial-recurrence}A binomial recurrence}

Our starting point is a technical calculation related to the entropy
function.
\begin{lemma}
\label{lem:sqrt-double-binomial-sum}There is an absolute constant
$c\geq1$ such that for all integers $k\geq1,$
\[
\sum_{i=1}^{k-1}\left(\frac{k}{i}\right)^{i/2}\left(\frac{k}{k-i}\right)^{(k-i)/2}\frac{1}{\sqrt{i(k-i)}}\leq c\;\sqrt{\frac{2^{k}}{k}}.
\]
\end{lemma}

\begin{proof}
To begin with,
\begin{align}
\sum_{i=1}^{k-1}\left(\frac{k}{i}\right)^{i/2} & \left(\frac{k}{k-i}\right)^{(k-i)/2}\frac{1}{\sqrt{i(k-i)}}\nonumber \\
 & =\sum_{i=1}^{k-1}\frac{2^{H(i/k)\cdot k/2}}{\sqrt{i(k-i)}}\nonumber \\
 & \leq2^{k/2}\sum_{i=1}^{k-1}\exp\left(-k\left(\frac{i}{k}-\frac{1}{2}\right)^{2}\right)\cdot\frac{1}{\sqrt{i(k-i)}},\label{eq:k-induction-start}
\end{align}
where the last step uses~(\ref{eq:entropy-upper-bound}). Continuing,
\begin{align}
\sum_{i=1}^{\lceil k/4\rceil-1}\exp\left(-k\left(\frac{i}{k}-\frac{1}{2}\right)^{2}\right)\frac{1}{\sqrt{i(k-i)}} & \leq\sum_{i=1}^{\lceil k/4\rceil-1}\exp\left(-k\left(\frac{i}{k}-\frac{1}{2}\right)^{2}\right)\nonumber \\
 & \leq\sum_{i=1}^{\lceil k/4\rceil-1}\e^{-k/16}\nonumber \\
 & <\frac{k\e^{-k/16}}{4}.
\end{align}
Symmetrically,
\begin{equation}
\sum_{i=\lfloor3k/4\rfloor+1}^{k-1}\exp\left(-k\left(\frac{i}{k}-\frac{1}{2}\right)^{2}\right)\frac{1}{\sqrt{i(k-i)}}<\frac{k\e^{-k/16}}{4}.
\end{equation}
Finally,
\begin{align}
\sum_{i=\lceil k/4\rceil}^{\lfloor3k/4\rfloor}\exp & \left(-k\left(\frac{i}{k}-\frac{1}{2}\right)^{2}\right)\frac{1}{\sqrt{i(k-i)}}\nonumber \\
 & \qquad\leq\frac{4}{\sqrt{3}k}\sum_{i=\lceil k/4\rceil}^{\lfloor3k/4\rfloor}\exp\left(-k\left(\frac{i}{k}-\frac{1}{2}\right)^{2}\right)\nonumber \\
 & \qquad\leq\frac{4}{\sqrt{3}k}\sum_{i=-\infty}^{\infty}\exp\left(-k\left(\frac{i}{k}-\frac{1}{2}\right)^{2}\right)\nonumber \\
 & \qquad\leq\frac{4}{\sqrt{3}k}+\frac{4}{\sqrt{3}k}\int_{-\infty}^{\infty}\exp\left(-k\left(\frac{x}{k}-\frac{1}{2}\right)^{2}\right)dx\nonumber \\
 & \qquad=\frac{4}{\sqrt{3}k}+\frac{4\sqrt{\pi}}{\sqrt{3k}}.\label{eq:k-induction-end}
\end{align}
Combining (\ref{eq:k-induction-start})\textendash (\ref{eq:k-induction-end}),
we conclude that
\begin{align*}
\sum_{i=1}^{k-1}\left(\frac{k}{i}\right)^{i/2}\left(\frac{k}{k-i}\right)^{(k-i)/2}\frac{1}{\sqrt{i(k-i)}} & \leq2^{k/2}\left(\frac{k\e^{-k/16}}{2}+\frac{4}{\sqrt{3}k}+\frac{4\sqrt{\pi}}{\sqrt{3k}}\right).
\end{align*}
This settles the lemma for a large enough absolute constant $c\geq1.$
\end{proof}
As an application of the previous lemma, we proceed to solve a key
recurrence that we will need to study $\Pcal_{n,k}$.
\begin{theorem}
\label{thm:N-n-k}Let $N\colon\{1,2,4,8,16,\ldots\}\times\ZZ^{+}\to[0,\infty)$
be any function that satisfies 
\begin{align*}
N(n,k) & \leq\binom{n}{k}^{1/2} & \qquad & \text{if }\min\{n,k\}\leq2,\\
N(n,k) & \leq2N\left(\frac{n}{2},k\right)+\sum_{i=1}^{k-1}N\!\left(\frac{n}{2},i\right)N\!\left(\frac{n}{2},k-i\right) & \qquad & \text{if }\min\{n,k\}>2.
\end{align*}
Let $c\geq1$ be the absolute constant from Lemma~\emph{\ref{lem:sqrt-double-binomial-sum}}.
Then for all $n,k,$
\begin{align}
N(n,k) & \leq\frac{(2+\sqrt{2})^{k-1}c^{k-1}}{\sqrt{k}}\left(\frac{n}{k}\right)^{k/2}.\label{eq:N-n-k-inductive-hypothesis}
\end{align}
\end{theorem}

\begin{proof}
The proof of (\ref{eq:N-n-k-inductive-hypothesis}) is by induction
on the pair $(n,k)\in\{1,2,4,8,16,\ldots\}\times\mathbb{Z}^{+}.$
For $\min\{n,k\}\leq2,$ the claimed bound~(\ref{eq:N-n-k-inductive-hypothesis})
is a weakening of $N(n,k)\leq\binom{n}{k}^{1/2}$. This establishes
the base case. For the inductive step, fix any $n\in\{4,8,16,32,\ldots\}$
and $k\geq3.$ Abbreviate $\alpha=2+\sqrt{2}.$ Then
\begin{align*}
N(n,k) & \leq2N\!\left(\frac{n}{2},k\right)+\sum_{i=1}^{k-1}N\!\left(\frac{n}{2},i\right)N\!\left(\frac{n}{2},k-i\right)\\
 & \leq2\cdot\frac{(\alpha c)^{k-1}}{\sqrt{k}}\left(\frac{n}{2k}\right)^{k/2}\\
 & \qquad\qquad+\sum_{i=1}^{k-1}\frac{(\alpha c)^{i-1}}{\sqrt{i}}\left(\frac{n}{2i}\right)^{i/2}\cdot\frac{(\alpha c)^{k-i-1}}{\sqrt{k-i}}\left(\frac{n}{2(k-i)}\right)^{(k-i)/2}\\
 & =2\cdot\frac{(\alpha c)^{k-1}}{\sqrt{k}}\left(\frac{n}{2k}\right)^{k/2}\\
 & \qquad\qquad+(\alpha c)^{k-2}\left(\frac{n}{2k}\right)^{k/2}\sum_{i=1}^{k-1}\frac{1}{\sqrt{i(k-i)}}\left(\frac{k}{i}\right)^{i/2}\left(\frac{k}{k-i}\right)^{(k-i)/2}\\
 & \leq2\cdot\frac{(\alpha c)^{k-1}}{\sqrt{k}}\left(\frac{n}{2k}\right)^{k/2}+\frac{(\alpha c)^{k-2}c}{\sqrt{k}}\left(\frac{n}{k}\right)^{k/2}\\
 & \leq\frac{1}{\sqrt{2}}\cdot\frac{(\alpha c)^{k-1}}{\sqrt{k}}\left(\frac{n}{k}\right)^{k/2}+\frac{(\alpha c)^{k-2}c}{\sqrt{k}}\left(\frac{n}{k}\right)^{k/2}\\
 & =\frac{(\alpha c)^{k-1}}{\sqrt{k}}\left(\frac{n}{k}\right)^{k/2},
\end{align*}
where the second step applies the inductive hypothesis; the fourth
step appeals to Lemma~\ref{lem:sqrt-double-binomial-sum}; and the
fifth step uses $k\geq3.$ This completes the inductive step and thereby
settles~(\ref{eq:N-n-k-inductive-hypothesis}).
\end{proof}

\subsection{\label{subsec:The-partition-measure}The partition measure}

For set families $\Acal,\Bcal\subseteq\Pcal(\ZZ)$, we define $\Acal\ast\Bcal=\{A\cup B:A\in\Acal,B\in\Bcal\}.$
We collect basic properties of this operation in the proposition below.
\begin{proposition}
\label{prop:ast-properties}Let $\Acal,\Bcal,\Ccal\subseteq\Pcal(\ZZ)$
be given. Then:
\begin{enumerate}
\item $\Acal*\varnothing=\varnothing*\Acal=\varnothing;$
\item \label{enu:ast-properties-idenpotence}$\Acal*\{\varnothing\}=\{\varnothing\}*\Acal=\Acal;$
\item $(\Acal\ast\Bcal)\ast\Ccal=\Acal\ast(\Bcal*\Ccal);$
\item $\Acal\ast\Bcal=\Bcal\ast\Acal;$
\item $(\Acal\cup\Bcal)\ast\Ccal=(\Acal\ast\Ccal)\cup(\Bcal\ast\Ccal).$
\end{enumerate}
\end{proposition}

\begin{proof}
All properties are immediate from the definition of the $\ast$ operation.
\end{proof}
We define an \emph{integer interval} to be any finite set whose elements
are consecutive integers, namely, $\{i,i+1,i+2,\ldots,j\}$ for some
$i,j\in\ZZ.$ As a special case, this includes the empty interval
$\varnothing$. An \emph{elementary family} is any family of the form
\begin{equation}
\Ecal=\binom{I_{1}}{k_{1}}\ast\binom{I_{2}}{k_{2}}\ast\cdots\ast\binom{I_{\ell}}{k_{\ell}},\label{eq:Ecal-defined}
\end{equation}
where $\ell$ is a positive integer, $I_{1},I_{2},\ldots,I_{\ell}$
are pairwise disjoint integer intervals, and $k_{1},k_{2},\ldots,k_{\ell}\in\{0,1,2\}.$
Trivial examples of elementary families are $\binom{\varnothing}{0}=\{\varnothing\}$
and $\binom{\varnothing}{1}=\varnothing.$ Another example of an elementary
family is the singleton family $\{A\}$ for any nonempty finite set
$A\subseteq\ZZ,$ using $\{A\}=\binom{\{a_{1}\}}{1}*\binom{\{a_{2}\}}{1}*\cdots*\binom{\{a_{\ell}\}}{1}$
where $a_{1}<a_{2}<\cdots<a_{\ell}$ are the distinct elements of
$A.$ We now define a partition measure that captures how efficiently
a family can be partitioned into elementary families.
\begin{definition}[Partition measure $\pi$]
For any family $\Acal\subseteq\Pcal(\oneton),$ define $\pi(\Acal)$
to be the minimum
\begin{equation}
\sum_{i=1}^{N}|\Ecal_{i}|^{1/2}\label{eq:minimize-partition}
\end{equation}
over all integers $N$ and all elementary families $\Ecal_{1},\Ecal_{2},\ldots,\Ecal_{N}$
that are pairwise disjoint and satisfy $\Ecal_{1}\cup\Ecal_{2}\cup\cdots\cup\Ecal_{N}=\Acal.$
\end{definition}

Straight from the definition, 
\begin{align*}
 & \pi(\varnothing)=0,\\
 & \pi(\{\varnothing\})=1.
\end{align*}
More generally,
\begin{equation}
|\Acal|^{1/2}\leq\pi(\Acal)\leq|\Acal|\label{eq:pi-trivial-bounds}
\end{equation}
for every $\Acal\subseteq\Pcal(\oneton)$. The upper bound here corresponds
to the trivial partition $\Acal=\bigcup_{A\in\Acal}\{A\}$. The lower
bound holds because~(\ref{eq:minimize-partition}) is no smaller
than $(\sum|\Ecal_{i}|)^{1/2}=|\Acal|^{1/2}.$ The following four
lemmas will be useful to us in analyzing the partition measure for
families of interest.
\begin{lemma}
\label{lem:ast-partition}Let $\Acal,\Bcal\subseteq\Pcal(\oneton)$
be given with $\Acal\cap\Bcal=\varnothing.$ Then
\[
\pi(\Acal\cup\Bcal)\leq\pi(\Acal)+\pi(\Bcal).
\]
\end{lemma}

\begin{proof}
If $\Acal=\varnothing$ or $\Bcal=\varnothing,$ the claim is trivial.
In the complementary case, let $\Acal=\Ecal_{1}\cup\cdots\cup\Ecal_{N}$
and $\Bcal=\Ecal'_{1}\cup\cdots\cup\Ecal_{N'}'$ be partitions of
$\Acal$ and $\Bcal$, respectively, into elementary families. Then
$\Acal\cup\Bcal=(\Ecal_{1}\cup\cdots\cup\Ecal_{N})\cup(\Ecal'_{1}\cup\cdots\cup\Ecal_{N'}')$
is a partition of $\Acal\cup\Bcal$ into elementary families.
\end{proof}
\begin{lemma}
\label{lem:ast-product}Let $\Acal\subseteq\Pcal(\{1,2,\ldots,m\})$
and $\Bcal\subseteq\Pcal(\{m+1,m+2,\ldots,n\})$ be given, for some
$1\leq m<n.$ Then 
\[
\pi(\Acal\ast\Bcal)\leq\pi(\Acal)\,\pi(\Bcal).
\]
\end{lemma}

\begin{proof}
If $\Acal=\varnothing$ or $\Bcal=\varnothing,$ we have $\Acal\ast\Bcal=\varnothing$
by Proposition~\ref{prop:ast-properties} and therefore $\pi(\Acal\ast\Bcal)=0.$
In the complementary case, let $\Acal=\Ecal_{1}\cup\cdots\cup\Ecal_{N}$
and $\Bcal=\Ecal'_{1}\cup\cdots\cup\Ecal_{N'}'$ be partitions of
$\Acal$ and $\Bcal$, respectively, into elementary families for
which $\pi(\Acal)$ and $\pi(\Bcal)$ are achieved. Then 
\begin{equation}
\Acal\ast\Bcal=\left(\bigcup_{i=1}^{N}\Ecal_{i}\right)\ast\Bcal=\bigcup_{i=1}^{N}(\Ecal_{i}\ast\Bcal)=\bigcup_{i=1}^{N}\bigcup_{j=1}^{N'}(\Ecal_{i}\ast\Ecal_{j}'),\label{eq:A-B-decomp}
\end{equation}
where the last two steps use the distributivity and commutativity
properties in Proposition~\ref{prop:ast-properties}. For any elementary
families $\Ecal_{i}\subseteq\Pcal(\{1,2,\ldots,m\})$ and $\Ecal'_{j}\subseteq\Pcal(\{m+1,m+2,\ldots,n\}),$
the family $\Ecal_{i}\ast\Ecal'_{j}\subseteq\Pcal(\{1,2,\ldots,n\})$
is also elementary, with $|\Ecal_{i}\ast\Ecal_{j}'|=|\Ecal_{i}|\;|\Ecal_{j}'|$.
Since all unions in~(\ref{eq:A-B-decomp}) are disjoint, we obtain
\begin{align*}
\pi(\Acal\ast\Bcal) & \leq\sum_{i=1}^{N}\sum_{j=1}^{N'}|\Ecal_{i}\ast\Ecal_{j}'|^{1/2}=\sum_{i=1}^{N}\sum_{j=1}^{N'}|\Ecal_{i}|^{1/2}|\Ecal_{j}'|^{1/2}=\pi(\Acal)\pi(\Bcal).
\end{align*}
\end{proof}

For a set $A\subseteq\ZZ$ and an integer $x,$ we define $A+x=\{a+x:a\in A\}.$
Analogously, for a family $\Acal\subseteq\Pcal(\ZZ),$ we define $\Acal+x=\{A+x:A\in\Acal\}.$
As one would expect, the partition measure is invariant under translation
by an integer.
\begin{lemma}
\label{lem:partition-translate}Let $\Acal\subseteq\Pcal(\oneton)$
be given. Then for all $x\in\NN,$
\[
\pi(\Acal)=\pi(\Acal+x).
\]
\end{lemma}

\begin{proof}
Consider an elementary family $\Ecal$ of the form~(\ref{eq:Ecal-defined}),
where $I_{1},I_{2},\ldots,I_{\ell}$ are pairwise disjoint integer
intervals and $k_{1},k_{2},\ldots,k_{\ell}\in\{0,1,2\}.$ Then
\[
\Ecal+x=\binom{I_{1}+x}{k_{1}}\ast\binom{I_{2}+x}{k_{2}}\ast\cdots\ast\binom{I_{\ell}+x}{k_{\ell}}
\]
is also an elementary family because the translated integer intervals
$I_{1}+x,I_{2}+x,\ldots,I_{\ell}+x$ are pairwise disjoint. Thus,
any partition $\Acal=\bigcup_{i=1}^{N}\Ecal_{i}$ into elementary
families gives an analogous partition $\Acal+x=\bigcup_{i=1}^{N}(\Ecal_{i}+x)$
into elementary families, with $|\Ecal_{i}+x|=|\Ecal_{i}|$ for all
$i.$
\end{proof}
In general, $\Acal\subseteq\Bcal$ does not imply $\pi(\Acal)\leq\pi(\Bcal)$.
However, $\pi$ enjoys the following monotonicity property.
\begin{lemma}
\label{lem:partition-monotone}For any positive integers $n,m,k$
with $n\leq m,$ 
\[
\pi(\Pcal_{n,k})\leq\pi(\Pcal_{m,k}).
\]
\end{lemma}

\begin{proof}
Consider an elementary family $\Ecal$ of the form~(\ref{eq:Ecal-defined}),
where $I_{1},I_{2},\ldots,I_{\ell}$ are pairwise disjoint integer
intervals and $k_{1},k_{2},\ldots,k_{\ell}\in\{0,1,2\}.$ Then
\[
\Ecal\cap\Pcal(\{1,2,\ldots,n\})=\binom{I_{1}\cap\{1,2,\ldots,n\}}{k_{1}}\ast\cdots\ast\binom{I_{\ell}\cap\{1,2,\ldots,n\}}{k_{\ell}}
\]
is also an elementary family because the integer intervals $I_{j}\cap\{1,2,\ldots,n\}$
for $j=1,2,\ldots,\ell$ are pairwise disjoint. Thus, any partition
$\Pcal_{m,k}=\bigcup_{i=1}^{N}\Ecal_{i}$ into elementary families
gives an analogous partition for $\Pcal_{n,k}$:
\begin{align*}
\Pcal_{n,k} & =\Pcal_{m,k}\cap\Pcal(\{1,2,\ldots,n\})\\
 & =\bigcup_{i=1}^{N}\Ecal_{i}\cap\Pcal(\{1,2,\ldots,n\}).
\end{align*}
Moreover, the elementary families in the new partition obey $|\Ecal_{i}\cap\Pcal(\{1,2,\ldots,n\})|\leq|\Ecal_{i}|$
for all $i.$
\end{proof}

\subsection{\label{subsec:An-efficient-partition}An efficient partition for
$\Pcal_{n,k}$}

Our analysis of the Fourier spectrum of decision trees relies on the
partition measure of the family $\Pcal_{n,k}.$ Recall from~(\ref{eq:pi-trivial-bounds})
that 
\[
\pi(\Pcal_{n,k})\geq\binom{n}{k}^{1/2}.
\]
We will now prove that this lower bound is tight up to a factor of
$2^{O(k)},$ by combining Lemmas~\ref{lem:ast-partition}\textendash \ref{lem:partition-monotone}
with the recurrence solved in Theorem~\ref{thm:N-n-k}.
\begin{theorem}
\label{thm:pi-n-choose-k}Let $c\geq1$ be the absolute constant from
Lemma~\emph{\ref{lem:sqrt-double-binomial-sum}}. Then for all positive
integers $n$ and $k,$
\begin{equation}
\pi(\Pcal_{n,k})\leq\frac{(2+\sqrt{2})^{k-1}c^{k-1}}{\sqrt{k}}\left(\frac{2n}{k}\right)^{k/2}.\label{eq:pi-n-choose-k}
\end{equation}
\end{theorem}

\begin{proof}
We first treat the case when $n$ is a power of $2.$ If $k\leq2,$
the family $\Pcal_{n,k}$ is elementary to start with. As a result,
\begin{align}
\pi(\Pcal_{n,k}) & \leq\binom{n}{k}^{1/2}, &  & k\leq2.\label{eq:pi-base-k}
\end{align}
If $n\leq2,$ the family $\Pcal_{n,k}$ is empty unless $k\leq2$.
Therefore, again
\begin{align}
\pi(\Pcal_{n,k}) & \leq\binom{n}{k}^{1/2}, &  & n\leq2.\label{eq:pi-base-n}
\end{align}
For $n,k\geq3,$ we have 
\begin{align}
\pi(\Pcal_{n,k}) & =\pi\left(\bigcup_{i=0}^{k}\left(\binom{\{1,2,\ldots,n/2\}}{i}\ast\binom{\{n/2+1,n/2+2,\ldots,n\}}{k-i}\right)\right)\nonumber \\
 & \leq\sum_{i=0}^{k}\pi\left(\binom{\{1,2,\ldots,n/2\}}{i}\ast\binom{\{n/2+1,n/2+2,\ldots,n\}}{k-i}\right)\nonumber \\
 & \leq\sum_{i=0}^{k}\pi\left(\binom{\{1,2,\ldots,n/2\}}{i}\right)\pi\left(\binom{\{n/2+1,n/2+2,\ldots,n\}}{k-i}\right)\nonumber \\
 & =\sum_{i=0}^{k}\pi(\Pcal_{n/2,i})\,\pi\left(\Pcal_{n/2,k-i}+\frac{n}{2}\right)\nonumber \\
 & =\sum_{i=0}^{k}\pi(\Pcal_{n/2,i})\,\pi(\Pcal_{n/2,k-i})\nonumber \\
 & =2\pi(\Pcal_{n/2,k})+\sum_{i=1}^{k-1}\pi(\Pcal_{n/2,i})\,\pi(\Pcal_{n/2,k-i}),\label{eq:pi-decomp-n-k}
\end{align}
where the second, third, and fifth steps apply Lemmas~\ref{lem:ast-partition},
\ref{lem:ast-product}, and~\ref{lem:partition-translate}, respectively,
and the last step uses $\pi(\{\varnothing\})=1.$

The recurrence relations~(\ref{eq:pi-base-k})\textendash (\ref{eq:pi-decomp-n-k})
show that the hypothesis of Theorem~\ref{thm:N-n-k} is satisfied
for the function $N(n,k):=\pi(\Pcal_{n,k})$. As a result, Theorem~\ref{thm:N-n-k}
implies that
\[
\pi(\Pcal_{n,k})\leq\frac{(2+\sqrt{2})^{k-1}c^{k-1}}{\sqrt{k}}\left(\frac{n}{k}\right)^{k/2}
\]
for any $n\in\{1,2,4,8,16,\ldots\}$ and $k\geq1.$ This upper bound
in turn implies~(\ref{eq:pi-n-choose-k}) for any $n\geq1$ and $k\geq1$:
\begin{align*}
\pi(\Pcal_{n,k}) & \leq\pi(\Pcal_{2^{\lceil\log n\rceil},k})\\
 & \leq\frac{(2+\sqrt{2})^{k-1}c^{k-1}}{\sqrt{k}}\left(\frac{2^{\lceil\log n\rceil}}{k}\right)^{k/2}\\
 & \leq\frac{(2+\sqrt{2})^{k-1}c^{k-1}}{\sqrt{k}}\left(\frac{2n}{k}\right)^{k/2},
\end{align*}
where the first step uses Lemma~\ref{lem:partition-monotone}.
\end{proof}

\section{Fourier spectrum of decision trees}
\label{sec:fourier-main}

This section is devoted to the proof of our main result on the Fourier
spectrum of decision trees. Stated in its simplest terms, our result
shows that for any function $f\colon\pomon\to\{-1,0,1\}$ computable
by a decision tree of depth $d$, the sum of the absolute values of
the Fourier coefficients of order $k$ is at most
\[
C^{k}\sqrt{\binom{d}{k}(1+\ln n)^{k-1}},
\]
where $C\geq1$ is an absolute constant that does not depend on $n,d,k.$
Sections~\ref{subsec:Slicing-the-tree}\textendash \ref{subsec:Contiguous-intervals}
focus on partitioning the Fourier spectrum of $f$ into highly structured
parts and analyzing each in isolation. Sections~\ref{subsec:Generalization-to-elementary}
and~\ref{subsec:Main-result} then recombine these pieces using the
machinery of elementary families.

\subsection{\label{subsec:Slicing-the-tree}Slicing the tree}

Let $T$ be a given decision tree of depth $d$ in Boolean variables
$x_{1},x_{2},\ldots,x_{n}.$ For a set family $\Scal\subseteq\Pcal(\{1,2,\ldots,d\}),$
we define a real function $T|_{\Scal}\colon\pomo^{n}\to\Re$ by 
\begin{equation}
T|_{\Scal}(x)=\sum_{S\in\Scal}\sum_{v\in\pomo^{d}}\;T(v)\cdot2^{-d}\prod_{i\in S}v_{i}x_{T(v_{1}v_{2}\ldots v_{i-1})}.\label{eq:T_S}
\end{equation}
A straightforward but crucial observation is that $T|_{\Scal}$ is
additive with respect to $\Scal,$ in the following sense.
\begin{proposition}
\label{prop:T_S-additive}Let $T$ be a depth-$d$ decision tree.
Consider any set families $\Scal',\Scal''\subseteq\Pcal(\{1,2,\ldots,d\})$
with $\Scal'\cap\Scal''=\varnothing.$ Then
\[
T|_{\Scal'\cup\Scal''}=T|_{\Scal'}+T|_{\Scal''}.
\]
\end{proposition}

\begin{proof}
Immediate by taking $\Scal=\Scal'\cup\Scal''$ in the defining equation~(\ref{eq:T_S}).
\end{proof}
\noindent The relevance of~(\ref{eq:T_S}) to the Fourier spectrum
of decision trees is borne out by the following lemma.
\begin{lemma}
\label{lem:T_S-spectrum}Let $T$ be a decision tree of depth $d$
and degree at most $0,$ computing a function $f\colon\pomo^{n}\to\Re.$
Then
\begin{align*}
L_{k}f & =T|_{\Pcal_{d,k}}, &  & k=0,1,2,\ldots,n.
\end{align*}
\end{lemma}

\begin{proof}
By Proposition~\ref{prop:tree-computation},
\begin{align}
f(x) & =\sum_{v\in\pomo^{d}}\;T(v)\cdot\prod_{i=1}^{d}\frac{1+v_{i}x_{T(v_{1}v_{2}\ldots v_{i-1})}}{2}\nonumber \\
 & =\sum_{v\in\pomo^{d}}\;T(v)\cdot2^{-d}\sum_{S\subseteq\{1,2,\ldots,d\}}\prod_{i\in S}v_{i}x_{T(v_{1}v_{2}\ldots v_{i-1})}\nonumber \\
 & =\sum_{k=0}^{d}\,\sum_{S\in\Pcal_{d,k}}\,\sum_{v\in\pomo^{d}}\;T(v)\cdot2^{-d}\prod_{i\in S}v_{i}x_{T(v_{1}v_{2}\ldots v_{i-1})}.\label{eq:f-T-expanded}
\end{align}
Since $\deg(T)\leq0,$ the coefficients $T(v)$ for $v\in\pomo^{d}$
are real numbers. Moreover, for any $v\in\pomo^{d}$ and $S\subseteq\{1,2,\ldots,d\},$
the definition of a decision tree ensures that the product $\prod_{i\in S}v_{i}x_{T(v_{1}v_{2}\ldots v_{i-1})}$
is a signed monomial of degree $|S|.$ We conclude from~(\ref{eq:f-T-expanded})
that the degree-$k$ homogeneous part of $f$ is 
\begin{align*}
L_{k}f & =\sum_{S\in\Pcal_{d,k}}\,\sum_{v\in\pomo^{d}}\;T(v)\cdot2^{-d}\prod_{i\in S}v_{i}x_{T(v_{1}v_{2}\ldots v_{i-1})}\\
 & =T|_{\Pcal_{d,k}.}
\end{align*}
In particular, $L_{k}f=0$ for $k\geq d+1.$
\end{proof}
Looking ahead, much of our analysis of the Fourier spectrum of decision
trees $T$ focuses on $T|_{\Ecal}$ for elementary families $\Ecal\subseteq\Pcal_{d,k}.$
This analysis proceeds by induction, with the following lemma required
as part of the inductive step. The reader may wish to review
Sections~\ref{sec:fourier} and~\ref{sec:gen-d-t} for the meaning of the symbols
$\Tcal,\Tcal^*,$ and $\norm{\cdot}.$
\begin{lemma}
\label{lem:collect-like-leaves}Let $T\in\Tcal(n,d,p,k)$ be a 
decision tree and $\Scal\subseteq\Pcal(\{1,2,\ldots,d\}).$ Define
$m=\max_{v\in\pomo^{d}}\norm{T(v)}.$ Then for each $i=1,2,\ldots,\binom{n}{k},$
there is a real $0\leq p_{i}\leq1$ and a decision tree $U_{i}\in\Tcal^{*}(n,d,p_{i},0)$
such that
\begin{align*}
 & p=\sum_{i=1}^{\binom{n}{k}}p_{i},\\
 & \norm{T|_{\Scal}}\leq m\sum_{i=1}^{\binom{n}{k}}\norm{U_{i}|_{\Scal}}.
\end{align*}
\end{lemma}

\begin{proof}
Let $\phi=\sum_{S\subseteq\oneton}\hat{\phi}(S)\chi_{S}$ be an arbitrary
nonzero polynomial with $\norm\phi\leq1.$ Consider the random variable
$X\in\{\pm\chi_{S}:\hat{\phi}(S)\ne0\}$ distributed according to
\[
\Prob[X=\sigma\chi_{S}]=\frac{|\hat{\phi}(S)|}{\norm\phi}\left(\frac{1}{2}+\frac{\norm\phi}{2}\cdot\sigma\sign\hat{\phi}(S)\right)
\]
for all $\sigma\in\pomo$ and $S\subseteq\oneton$. Then
\begin{align*}
\Exp X & =\sum_{S\subseteq\oneton}\,\sum_{\sigma\in\pomo}\sigma\chi_{S}\cdot\frac{|\hat{\phi}(S)|}{\norm\phi}\left(\frac{1}{2}+\frac{\norm\phi}{2}\cdot\sigma\sign\hat{\phi}(S)\right)\\
 & =\sum_{S\subseteq\oneton}\chi_{S}\cdot\frac{|\hat{\phi}(S)|}{\norm\phi}\cdot\norm\phi\cdot\sign\hat{\phi}(S)\\
 & =\phi(x).
\end{align*}
In conclusion, $\phi$ can be viewed as the \emph{expected value} of
a random variable $X\in\{\pm\chi_{S}:\hat{\phi}(S)\ne0\}.$

We may assume that $T$ has at least one nonzero leaf, since otherwise
the lemma holds trivially with $p_{1}=p_{2}=\cdots=p_{\binom{n}{k}}=p=0.$
Recall from the definition of $m$ that
$\norm{T(v)/m}=\norm{T(v)}/m\leq1$ for each $v.$
Now the previous paragraph implies that for every leaf $v\in\pomo^{d}$
with $T(v)\ne0,$ the polynomial $T(v)/m$ is the expected value of
a random variable $X_{v}$ whose support is contained in the set of
the nonzero degree-$k$ monomials of $T(v)$ with $\pm1$ coefficients.
The joint distribution of the $X_{v}$ is immaterial for our purposes,
but for concreteness let us declare them to be independent. Then
\begin{align*}
T|_{\Scal}(x) & =m\sum_{S\in\Scal}\sum_{v\in\pomo^{d}}\;\frac{T(v)}{m}\cdot2^{-d}\prod_{i\in S}v_{i}x_{T(v_{1}v_{2}\ldots v_{i-1})}\\
 & =m\sum_{S\in\Scal}\sum_{\substack{v\in\pomo^{d}:\\
T(v)\ne0
}
}\;\Exp[X_{v}]\cdot2^{-d}\prod_{i\in S}v_{i}x_{T(v_{1}v_{2}\ldots v_{i-1})}\\
 & =m\Exp\left[\sum_{S\in\Scal}\sum_{\substack{v\in\pomo^{d}:\\
T(v)\ne0
}
}X_{v}\cdot2^{-d}\prod_{i\in S}v_{i}x_{T(v_{1}v_{2}\ldots v_{i-1})}\right].
\end{align*}
Applying Proposition~\ref{prop:fourier-norm-convex},
\begin{align}
\norm{T|_{\Scal}} & \leq m\Exp\NORM{\sum_{S\in\Scal}\sum_{\substack{v\in\pomo^{d}:\\
T(v)\ne0
}
}X_{v}\cdot2^{-d}\prod_{i\in S}v_{i}x_{T(v_{1}v_{2}\ldots v_{i-1})}}.\label{eq:T-S-extreme-points}
\end{align}
In the last expression, each random variable $X_{v}$ is a signed
monomial of degree $k$ that does not contain any of the variables
$x_{T(\varepsilon)},x_{T(v_{1})},\ldots,x_{T(v_{1}v_{2}\ldots v_{d-1})}$
queried along the path from the root to $v.$ Therefore, the expectation
in~(\ref{eq:T-S-extreme-points}) is over $\norm{U|_{\Scal}}$ for
some trees $U\in\Tcal^{*}(n,d,p,k)$. We conclude that there is a
fixed decision tree $U\in\Tcal^{*}(n,d,p,k)$ with 
\begin{equation}
\norm{T|_{\Scal}}\leq m\,\norm{U|_{\Scal}}.\label{eq:T-U-m}
\end{equation}

Finally, decompose
\[
U|_{\Scal}=\sum_{S\in\Pcal_{n,k}}U_{S}|_{\Scal}\cdot\chi_{S},
\]
where $U_{S}$ is the depth-$d$ decision tree given by
\[
U_{S}(v)=\begin{cases}
U(v) & \text{if }|v|\leq d-1,\\
-1 & \text{if }|v|=d\text{ and }U(v)=-\chi_{S},\\
1 & \text{if }|v|=d\text{ and }U(v)=\chi_{S},\\
0 & \text{otherwise.}
\end{cases}
\]
In other words, $U_{S}$ is the decision tree obtained from $U$ by
setting to $1$ every leaf labeled $\chi_{S}$, setting to $-1$ every
leaf labeled $-\chi_{S}$, and setting all other leaves to $0$. It
is clear that the densities of the $U_{S}$ sum to the density of
$U$. We conclude that $U_{S}\in\Tcal^{*}(n,d,p_{S},0)$ for some
reals $0\leq p_{S}\leq1$ with $\sum_{S\in\Pcal_{n,k}}p_{S}=p.$ Moreover,
\begin{align*}
\norm{T|_{\Scal}} & \leq m\,\norm{U|_{\Scal}}\\
 & \leq m\sum_{S\in\Pcal_{n,k}}\norm{U_{S}|_{\Scal}\cdot\chi_{S}}\\
 & \leq m\sum_{S\in\Pcal_{n,k}}\norm{U_{S}|_{\Scal}},
\end{align*}
where the first step is a restatement of~(\ref{eq:T-U-m}); the second
step applies Proposition~\ref{prop:fourier-norm-convex}; and the
last step is justified by Proposition~\ref{prop:fourier-norm-submult}.
In summary, the decision trees $U_{1},U_{2},\ldots,U_{\binom{n}{k}}$
in the statement of the lemma can be taken to be the $U_{S}$, in
arbitrary order.
\end{proof}

\subsection{Analytic preliminaries}
\label{subsec:analytic-preliminaries}

For positive integers $m$ and $k$, define
\[
\Lambda_{m,k}(p)=\begin{cases}
0 & \text{if }p=0,\\
{\displaystyle p\,\sqrt{\left(\frac{1}{k}\ln\frac{e^{k}m^{k-1}}{p}\right)^{k}}} & \text{if }0<p\leq1/m,\\
\rule{0mm}{9mm}{\displaystyle p\,\sqrt{\left(\ln\frac{e}{p}\right)(\ln em)^{k-1}}} & \text{if }1/m<p\leq1.
\end{cases}
\]
Our bound for the Fourier spectrum of decision trees is in terms of
this function. As preparation for our main result, we now collect
the analytic properties of $\Lambda_{m,k}$ that we will need.
\begin{lemma}
\label{lem:Lambda-cont-monotone-concave}Let $m$ and $k$ be any
positive integers. Then:
\begin{enumerate}
\item $\Lambda_{m,k}$ is continuous on $[0,1];$
\item $\Lambda_{m,k}$ is monotonically increasing on $[0,1];$
\item $\Lambda_{m,k}$ is concave on $[0,1].$
\end{enumerate}
\end{lemma}

\begin{proof}
(i)~The continuity on $(0,1/m)\cup(1/m,1]$ is immediate. The continuity
at $p=0$ and $p=1/m$ follows by examining the one-sided limits at
those points, which are $0$ and $(\ln em)^{k/2}/m,$ respectively.

(ii)~Considering the derivative $\Lambda_{m,k}'$ separately on $(0,1/m)$
and $(1/m,1]$, one finds in both cases that the derivative is positive:
\[
\Lambda_{m,k}'(p)=\begin{cases}
{\displaystyle \sqrt{\left(\frac{1}{k}\ln\frac{e^{k}m^{k-1}}{p}\right)^{k}}\left(1-\frac{k}{2\ln(e^{k}m^{k-1}/p)}\right)} & \text{if }0<p<1/m,\\
{\displaystyle \rule{0mm}{9mm}\left(\sqrt{\ln\frac{e}{p}}-\frac{1}{2\sqrt{\ln(e/p)}}\right)\sqrt{(\ln em)^{k-1}}} & \text{if }1/m<p\leq1.
\end{cases}
\]
Since $\Lambda_{m,k}$ is continuous on $[0,1]$, it follows that
$\Lambda_{m,k}$ is monotonically increasing on $[0,1].$

(iii)~The one-sided derivatives of $\Lambda_{m,k}$ at $p=1/m$ are
both $(\ln em)^{\frac{k-2}{2}}\ln(\sqrt{e}m)$. Along with the 
formulas derived in (ii) for $\Lambda_{m,k}'$ on $(0,1/m)$ and $(1/m,1],$
this shows that $\Lambda_{m,k}$ is continuously differentiable
on $(0,1]$. The formulas in~(ii) further reveal that $\Lambda_{m,k}'$
is monotonically decreasing on $(0,1/m)$ and on
$(1/m,1].$ Indeed, the formula for $(0,1/m)$ shows that
$\Lambda'_{m,k}$ is the product of two nonnegative
factors, each of which clearly decreases with $p$; the
formula for $(1/m,1]$ shows that $\Lambda'_{m,k}$ is a
constant multiple of
$\sqrt{\ln(e/p)}-1/(2\sqrt{\ln(e/p)}),$ where the
minuend decreases with $p$ and the subtrahend increases
with $p.$ 

Since $\Lambda'_{m,k}$ is monotonically 
decreasing on $(0,1/m)$ and on $(1/m,1],$ and
continuous on $(0,1],$
we conclude that $\Lambda'_{m,k}$
is monotonically decreasing on $(0,1],$ which in turn makes $\Lambda_{m,k}$
concave on $(0,1].$ Since $\Lambda_{m,k}$ is continuous at $0,$
we conclude that $\Lambda_{m,k}$ is concave on the entire interval
$[0,1].$~
\end{proof} 
The function $\Lambda_{m,k}$ arises in our work not as the
closed form defined above, but rather as a certain optimization problem from
an inductive argument. We now describe this optimization view and prove its
equivalence with the above definition.
\begin{lemma}
\label{lem:Lambda-opt}Let $m$ and $k$ be positive integers. Then
for $0<p\leq1,$
\begin{equation}
\Lambda_{m,k}(p)=p\max\left\{ \prod_{i=1}^{k}\sqrt{\ln ex_{i}}:x_{i}\geq1\text{ and }x_{1}x_{2}\ldots x_{i}\leq\frac{m^{i-1}}{p}\text{ for all }i\right\} .\label{eq:Lambda-opt}
\end{equation}
\end{lemma}

\begin{proof}
For $k=1,$ the left-hand side and right-hand side are clearly $p\sqrt{\ln(e/p)}.$
In what follows, we treat the complementary case $k\geq2.$

For $0<p\leq1/m,$ the upper bound in~(\ref{eq:Lambda-opt}) follows
by taking $x_{1}=x_{2}=\cdots=x_{k}=(m^{k-1}/p)^{1/k}$. For $1/m<p\leq1,$
the upper bound follows by setting $x_{1}=1/p$ and $x_{2}=\cdots=x_{k}=m.$

For the lower bound in~(\ref{eq:Lambda-opt}), fix reals $x_{1},x_{2},\ldots,x_{k}\geq1$
with $x_{1}\leq1/p$ and $x_{1}x_{2}\ldots x_{k}\leq m^{k-1}/p$.
Then
\begin{align}
\sqrt{\ln ex_{1}}\cdot\prod_{i=2}^{k}\sqrt{\ln ex_{i}} & \leq\sqrt{\ln ex_{1}}\left(\frac{1}{k-1}\ln e^{k-1}x_{2}\ldots x_{k}\right)^{(k-1)/2}\nonumber \\
 & \leq\sqrt{\ln ex_{1}}\left(\frac{1}{k-1}\ln\frac{e^{k-1}m^{k-1}}{px_{1}}\right)^{(k-1)/2},\label{eq:opt-rewritten}
\end{align}
where the first step applies the AM\textendash GM inequality. Elementary
calculus shows that~(\ref{eq:opt-rewritten}) as a function of $x_{1}$
is monotonically increasing on $[1,(m^{k-1}/p)^{1/k}]$ and monotonically
decreasing on $[(m^{k-1}/p)^{1/k},m^{k-1}/p]$. Recalling that $1\leq x_{1}\leq1/p,$
we conclude that~(\ref{eq:opt-rewritten}) is maximized at 
\begin{align*}
x_{1} & =\min\left(\left(\frac{m^{k-1}}{p}\right)^{1/k},\frac{1}{p}\right)\\
 & =\begin{cases}
(m^{k-1}/p)^{1/k} & \text{if }0<p\leq1/m,\\
1/p & \text{if }1/m<p\leq1.
\end{cases}
\end{align*}
Making this substitution shows that (\ref{eq:opt-rewritten}) does
not exceed $\Lambda_{m,k}(p).$
\end{proof}
\noindent This optimization view of $\Lambda_{m,k}$ implies a host
of useful facts that would be a hassle to prove directly. We state
them as corollaries below.
\begin{corollary}
\label{cor:Lambda-divided-p-monotone}Let $m$ and $k$ be positive
integers. Then for all $p,q\in[0,1],$
\[
q\Lambda_{m,k}(p)\leq\Lambda_{m,k}(pq).
\]
\end{corollary}

\begin{proof}
If $p=0$ or $q=0,$ the left-hand side and right-hand side both vanish.
If $p,q\in(0,1],$ the claim can be equivalently stated as $\Lambda_{m,k}(p)/p\leq\Lambda_{m,k}(pq)/pq,$
which in turn amounts to saying that $\Lambda_{m,k}(p)/p$ is monotonically
nonincreasing in $p\in(0,1].$ This monotonicity is immediate from
Lemma~\ref{lem:Lambda-opt}.
\end{proof}
\begin{corollary}
\label{cor:Lambda-collapsing}Let $m,k,\ell$ be positive integers.
Then for all $p,q\in[0,1],$
\[
\Lambda_{m,k}(p)\,\Lambda_{m,\ell}\!\left(\frac{q}{m}\right)\leq\frac{\Lambda_{m,k+\ell}(pq)}{m}.
\]
\end{corollary}

\begin{proof}
If $p=0$ or $q=0,$ the left-hand side and right-hand side both vanish.
In what follows, we treat $p,q\in(0,1].$ By Lemma~\ref{lem:Lambda-opt},
\begin{equation}
\Lambda_{m,k}(p)\,\Lambda_{m,\ell}\!\left(\frac{q}{m}\right)=\frac{pq}{m}\max\left\{ \prod_{i=1}^{k+\ell}\sqrt{\ln ex_{i}}\right\} ,\label{eq:lambda-product}
\end{equation}
where the maximum is over all $x_{1},x_{2},\ldots,x_{k+\ell}\geq1$
such that
\begin{align}
 & x_{1}x_{2}\ldots x_{i}\leq\frac{m^{i-1}}{p}, &  & i=1,2,\ldots,k,\label{eq:x-low}\\
 & x_{k+1}x_{k+2}\ldots x_{i}\leq\frac{m^{i-k-1}}{q/m}, &  & i=k+1,\ldots,k+\ell.\label{eq:x-high}
\end{align}
Equations~(\ref{eq:x-low}) and~(\ref{eq:x-high}) imply that the
maximum in~(\ref{eq:lambda-product}) is over $x_{1},x_{2},\ldots,x_{k+\ell}\geq1$
that satisfy, among other things, $x_{1}x_{2}\ldots x_{i}\leq m^{i-1}/(pq)$
for $i=1,2,\ldots,k+\ell.$ Now Lemma~\ref{lem:Lambda-opt} implies
that the right-hand side of~(\ref{eq:lambda-product}) is at most~$\Lambda_{m,k+\ell}(pq)/m.$
\end{proof}
\begin{corollary}
\label{lem:Lambda-sqrt}Let $m$ and $k$ be positive integers. Then
for all $p\in[0,1],$
\begin{equation}
\Lambda_{m,k}(p)\leq\sqrt{2^{k}p}\cdot\Lambda_{m,k}(\sqrt{p}).\label{eq:Lambda-opt-1}
\end{equation}
\end{corollary}

\begin{proof}
For $p=0$, the left-hand side and right-hand side both vanish. For
$p\in(0,1],$ we have:
\begin{align*}
\Lambda_{m,k}(p) & =p\max\left\{ \prod_{i=1}^{k}\sqrt{\ln ex_{i}}:x_{i}\geq1\text{ and }x_{1}x_{2}\ldots x_{i}\leq\frac{m^{i-1}}{p}\text{ for all }i\right\} \\
 & \leq p\max\left\{ \prod_{i=1}^{k}\sqrt{\ln ex_{i}^{2}}:x_{i}\geq1\text{ and }x_{1}x_{2}\ldots x_{i}\leq\frac{m^{i-1}}{\sqrt{p}}\text{ for all }i\right\} \\
 & \leq\sqrt{2^{k}}\,p\max\left\{ \prod_{i=1}^{k}\sqrt{\ln ex_{i}}:x_{i}\geq1\text{ and }x_{1}x_{2}\ldots x_{i}\leq\frac{m^{i-1}}{\sqrt{p}}\text{ for all }i\right\} \\
 & =\sqrt{2^{k}p}\cdot\Lambda_{m,k}(\sqrt{p}),
\end{align*}
where the first and last steps use Lemma~\ref{lem:Lambda-opt}.
\end{proof}

\subsection{\label{subsec:Contiguous-intervals}Contiguous intervals}

We have reached a focal point of this paper, where we analyze $T|_{\Ecal}$
for arbitrary decision trees $T$ and ``canonical'' elementary families
$\Ecal$. The families that we allow are those of the form
\[
\Ecal=\binom{I_{1}}{k_{1}}*\binom{I_{2}}{k_{2}}*\cdots*\binom{I_{\ell}}{k_{\ell}},
\]
where $k_{1},k_{2},\ldots,k_{\ell}\in\{1,2\}$ and the integer intervals
$I_{1},I_{2},\ldots,I_{\ell}$ form a partition of $\{1,2,\ldots,d\}$
with $d$ being the depth of $T.$ The proof proceeds by induction
on $\ell,$ with Lemmas~\ref{lem:T_S-spectrum}, \ref{lem:collect-like-leaves},
and the analytic properties of $\Lambda_{m,k}$ applied in the inductive
step. We will later generalize this result to arbitrary elementary
families $\Ecal$ and, from there, to all of $\Pcal_{d,k}$ via the
results of Section~\ref{sec:Elementary-set-families}.
\begin{theorem}
\label{thm:norm-canonical}Let $T\in\Tcal^{*}(n,d,p,0)$ be given,
for some $0\leq p\leq1$ and integers $n,d\geq1$. Let $\ell\geq1.$
Let $I_{1},I_{2},\ldots,I_{\ell}$ be pairwise disjoint integer intervals
with $I_{1}\cup I_{2}\cup\cdots\cup I_{\ell}=\{1,2,\ldots,d\},$ and
let $k_{1},k_{2},\ldots,k_{\ell}\in\{1,2\}.$ Abbreviate $k=k_{1}+k_{2}+\cdots+k_{\ell}.$
Then
\begin{equation}
\NORM{T|_{\binom{I_{1}}{k_{1}}*\binom{I_{2}}{k_{2}}*\cdots*\binom{I_{\ell}}{k_{\ell}}}}\leq2C^{k}\,12^{\ell-1}\Lambda_{n^{2},k}(p)\prod_{i=1}^{\ell}\binom{|I_{i}|}{k_{i}}^{1/2},\label{eq:norm-canonical}
\end{equation}
where $C\geq1$ is the absolute constant from Theorem~\emph{\ref{thm:tal-depth-2}}.
\end{theorem}

\begin{proof}
The proof is by induction on $\ell.$ The base case $\ell=1$ corresponds
to $I_{1}=\{1,2,\ldots,d\}$. Let $f\colon\pomon\to\{-1,0,1\}$ be
the function computed by $T.$ If $f\equiv0,$ we have $T|_{\binom{I_{1}}{k_{1}}}\equiv0$
and the bound holds trivially. In the complementary case $f\not\equiv0,$
recall from Fact~\ref{fact:dns-prob-nonzero} that
\begin{equation}
\Prob_{x\in\pomon}[f(x)\ne0]=p.\label{eq:prob-f-nonzero}
\end{equation}
Then
\begin{align*}
\norm{T|_{\binom{I_{1}}{k_{1}}}} & =\norm{L_{k_{1}}f}\\
 & \leq\binom{|I_{1}|}{k_{1}}^{1/2}C^{k_{1}}p\prod_{i=1}^{k_{1}}\sqrt{\ln\frac{en^{i-1}}{p}}\\
 & \leq\binom{|I_{1}|}{k_{1}}^{1/2}\cdot2C^{k_{1}}p\prod_{i=1}^{k_{1}}\sqrt{\ln\frac{en^{i-1}}{\sqrt{p}}}\\
 & \leq\binom{|I_{1}|}{k_{1}}^{1/2}\cdot2C^{k_{1}}\Lambda_{n^{2},k_{1}}(p)\\
 & =\binom{|I_{1}|}{k_{1}}^{1/2}\cdot2C^{k}\Lambda_{n^{2},k}(p),
\end{align*}
where the first step is valid by Lemma~\ref{lem:T_S-spectrum}; the
second step uses Theorem~\ref{thm:tal-depth-2} along with~(\ref{eq:prob-f-nonzero})
and~$k_{1}\leq2$; and the fourth step applies Lemma~\ref{lem:Lambda-opt}
with $m=n^2$ and $k=k_1\leq2.$
This settles the base case. 
We note that the last derivation could be sharpened so as to replace
$\Lambda_{n^2,k}$ with $\Lambda_{n,k}$; however, this savings would not make a
difference because the bound in the inductive step requires $\Lambda_{n^2,k}.$

We now turn to the inductive step, $\ell\geq2.$ If $k_{j}>|I_{j}|$
for some $j,$ then 
\[
T|_{\binom{I_{1}}{k_{1}}*\binom{I_{2}}{k_{2}}*\cdots*\binom{I_{\ell}}{k_{\ell}}}=T|_{\varnothing}=0,
\]
and the claimed bound holds trivially. We may therefore assume that
$k_{j}\leq|I_{j}|$ for every $j=1,2,\ldots,\ell.$ This means in
particular that the intervals $I_{1},I_{2},\ldots,I_{\ell}$ are nonempty.
Furthermore, by renumbering the intervals if necessary, we may assume
that $I_{1}<I_{2}<\cdots<I_{\ell}.$ Put $d'=\max I_{\ell-1},$ so
that $I_{\ell}=\{d'+1,d'+2,\ldots,d\}.$ Abbreviate
\begin{align*}
\Scal' & =\binom{I_{1}}{k_{1}}*\binom{I_{2}}{k_{2}}*\cdots*\binom{I_{\ell-1}}{k_{\ell-1}},\\
\Scal & =\Scal'*\binom{I_{\ell}}{k_{\ell}}.
\end{align*}
For $j=0,1,2,\ldots,$ define a depth-$d'$ decision tree $T_{j}'$
by
\[
T'_{j}(v)=\begin{cases}
T(v) & \text{if }v\in\pomo^{\leq d'-1},\\
T_{v}|_{\binom{\{1,2,\ldots,|I_{\ell}|\}}{k_{\ell}}} & \text{if }v\in\pomo^{d'}\text{ and }\dns(T_{v})\in(3^{-j-1},3^{-j}]\\
0 & \text{otherwise.}
\end{cases},
\]
This definition corresponds to part~(i) of the program set forth in the introduction (page~\pageref{three-part-program}).
Observe that $T_{j}'$ is a valid decision tree in that for every
leaf $v\in\pomo^{d'}$, the label $T_{j}'(v)\in\Re[x_{1},x_{2},\ldots,x_{n}]$
is a function that does not depend on any of the variables 
\begin{equation}
x_{T(\varepsilon)},x_{T(v_{1})},x_{T(v_{1}v_{2})},\ldots,x_{T(v_{1}v_{2}\ldots v_{d'-1})}\label{eq:vars-queried}
\end{equation}
queried along the path from the root to $v.$ Indeed, recall from
Lemma~\ref{lem:T_S-spectrum} that $T_{v}|_{\binom{\{1,2,\dots,|I_{\ell}|\}}{k_{\ell}}}$
is the $k_{\ell}$-th homogeneous part of the function computed by
the subtree $T_{v}$, which by definition does not use any of the
variables~(\ref{eq:vars-queried}). We also note that all but finitely
many of the trees $T_{0},T_{1},T_{2},\ldots$ are identically zero;
however, working with the infinite sequence is more convenient from
the point of view of notation and calculations.

The weighted densities of $T_{0}',T_{1}',T_{2}',\ldots$ are given
by
\begin{align}
\sum_{j=0}^{\infty}3^{-j}\dns(T_{j}') & =\sum_{j=0}^{\infty}3^{-j}\Prob_{v\in\pomo^{d'}}[T_{j}'(v)\ne0]\nonumber \\
 & \leq\sum_{j=0}^{\infty}3^{-j}\Prob_{v\in\pomo^{d'}}[3^{-j-1}<\dns(T_{v})\leq3^{-j}]\nonumber \\
 & \leq3\Exp_{v\in\pomo^{d'}}\dns(T_{v})\nonumber \\
 & =3\dns(T)\nonumber \\
 & =3p.\label{eq:weighted-densities}
\end{align}
The relevance of $T'_{j}$ to our analysis of $T|_{\Scal}$ is clear
from the following claims, whose proofs we will present shortly.
\begin{claim}
\label{claim:T_S-T_j}$T|_{\Scal}=\sum_{j=0}^{\infty}T_{j}'|_{\Scal'}.$
\end{claim}

\begin{claim}
\label{claim:T-j-norm}For $j=0,1,2,\ldots,$ one has
\[
\norm{T_{j}'|_{\Scal'}}\leq8C^{k}\,12^{\ell-2}\binom{|I_{1}|}{k_{1}}^{1/2}\cdots\binom{|I_{\ell}|}{k_{\ell}}^{1/2}\cdot\sqrt{3^{-j}}\Lambda_{n^{2},k}(\sqrt{3^{-j}}\dns(T_{j}')).
\]
\end{claim}

We now complete the proof of the theorem. Set $s=\sum_{i=0}^{\infty}\sqrt{3^{-i}}=2.3660\ldots.$
Then
\begin{align}
\sum_{j=0}^{\infty}\sqrt{3^{-j}}\Lambda_{n^{2},k}(\sqrt{3^{-j}}\dns(T_{j}')) & =s\sum_{j=0}^{\infty}\frac{\sqrt{3^{-j}}}{s}\Lambda_{n^{2},k}(\sqrt{3^{-j}}\dns(T_{j}'))\nonumber \\
 & \leq s\Lambda_{n^{2},k}\left(\sum_{j=0}^{\infty}\frac{\sqrt{3^{-j}}}{s}\cdot\sqrt{3^{-j}}\dns(T_{j}')\right)\nonumber \\
 & \leq3\Lambda_{n^{2},k}\left(\frac{s}{3}\sum_{j=0}^{\infty}\frac{\sqrt{3^{-j}}}{s}\cdot\sqrt{3^{-j}}\dns(T_{j}')\right)\nonumber \\
 & \leq3\Lambda_{n^{2},k}(p),\label{eq:sum-of-weighted-IH}
\end{align}
where the second step is valid by Lemma~\ref{lem:Lambda-cont-monotone-concave}~(iii);
the third step uses Corollary~\ref{cor:Lambda-divided-p-monotone}
with $q=s/3$; and the final step is justified by~(\ref{eq:weighted-densities})
and Lemma~\ref{lem:Lambda-cont-monotone-concave}~(ii). As a result,
\begin{align*}
\norm{T|_{\Scal}} & \leq\sum_{j=0}^{\infty}\norm{T_{j}'|_{\Scal'}}\\
 & \leq8C^{k}\,12^{\ell-2}\binom{|I_{1}|}{k_{1}}^{1/2}\cdots\binom{|I_{\ell}|}{k_{\ell}}^{1/2}\sum_{j=0}^{\infty}\sqrt{3^{-j}}\Lambda_{n^{2},k}(\sqrt{3^{-j}}\dns(T_{j}'))\\
 & \leq2C^{k}\,12^{\ell-1}\binom{|I_{1}|}{k_{1}}^{1/2}\cdots\binom{|I_{\ell}|}{k_{\ell}}^{1/2}\Lambda_{n^{2},k}(p),
\end{align*}
where the first step is valid by Proposition~\ref{prop:fourier-norm-convex}
and Claim~\ref{claim:T_S-T_j}, bearing in mind once again that all
but finitely many of the $T_{j}'|_{\Scal'}$ are identically zero;
the second step is a substitution from Claim~\ref{claim:T-j-norm};
and the final step uses~(\ref{eq:sum-of-weighted-IH}). This completes
the inductive step.
\end{proof}
\begin{proof}[Proof of Claim~\emph{\ref{claim:T_S-T_j}}]
Let $T'$ be the depth-$d'$ decision tree given by
\[
T'(v)=\begin{cases}
T(v) & \text{if }v\in\pomo^{\leq d'-1},\\
T_{v}|_{\binom{\{1,2,\ldots,|I_{\ell}|\}}{k_{\ell}}} & \text{if }v\in\pomo^{d'}.
\end{cases}
\]
This definition implies that 
\begin{align*}
T'(v)=\begin{cases}
T'_{0}(v)=T'_{1}(v)=T'_{2}(v)=\cdots & \text{if }v\in\pomo^{\leq d'-1},\\
T'_{0}(v)+T'_{1}(v)+T'_{2}(v)+\cdots & \text{if }v\in\pomo^{d'}.
\end{cases}
\end{align*}
As a result,
\begin{align}
T'|_{\Scal'} & =\sum_{S\in\Scal'}\sum_{v\in\pomo^{d'}}\;\left(\sum_{j=0}^{\infty}T'_{j}(v)\right)\cdot2^{-d'}\prod_{i\in S}v_{i}x_{T'(v_{1}v_{2}\ldots v_{i-1})}\nonumber \\
 & =\sum_{j=0}^{\infty}\sum_{S\in\Scal'}\sum_{v\in\pomo^{d'}}\;T'_{j}(v)\cdot2^{-d'}\prod_{i\in S}v_{i}x_{T'_{j}(v_{1}v_{2}\ldots v_{i-1})}\nonumber \\
 & =\sum_{j=0}^{\infty}T'_{j}|_{\Scal'}.\label{eq:T-prime-Scal-prime-decomposition}
\end{align}
Thus, the proof will be complete once we show that $T'|_{\Scal'}=T|_{\Scal}.$

Since $\Scal$ is the family of sets $S$ expressible as $S=S'\cup S''$
with $S'\in\Scal'$ and $S''\in\binom{I_{\ell}}{k_{\ell}},$ we have
\begin{align}
T|_{\Scal} & =\sum_{S\in\Scal}\sum_{v\in\pomo^{d}}\;T(v)\cdot2^{-d}\prod_{i\in S}v_{i}x_{T(v_{1}v_{2}\ldots v_{i-1})}\nonumber \\
 & =\sum_{S'\in\Scal'}\sum_{S''\in\binom{I_{\ell}}{k_{\ell}}}\sum_{v\in\pomo^{d}}\;T(v)\cdot2^{-d}\prod_{i\in S'\cup S''}v_{i}x_{T(v_{1}v_{2}\ldots v_{i-1})}.\label{eq:T-S-intermediate}
\end{align}
Recall that $\Scal'\subseteq\Pcal(\{1,2,\ldots,d'\})$ and $I_{\ell}=\{d'+1,d'+2,\ldots,d\}.$
As a result,~(\ref{eq:T-S-intermediate}) yields
\begin{multline*}
T|_{\Scal}=\sum_{S'\in\Scal'}\sum_{S''\in\binom{I_{\ell}}{k_{\ell}}}\sum_{\substack{v'\in\pomo^{d'}\\
v''\in\pomo^{d-d'}
}
}\;T(v'v'')\cdot2^{-d}\prod_{i\in S'}v'_{i}x_{T(v'_{1}v'_{2}\ldots v'_{i-1})}\\
\times\prod_{i\in S''}v''_{i-d'}x_{T(v'v_{1}''v_{2}''\ldots v''_{i-1-d'})}.
\end{multline*}
 A change of index now gives
\begin{multline*}
T|_{\Scal}=\sum_{S'\in\Scal'}\sum_{S''\in\binom{\{1,2,\ldots,|I_{\ell}|\}}{k_{\ell}}}\sum_{\substack{v'\in\pomo^{d'}\\
v''\in\pomo^{d-d'}
}
}\!T(v'v'')\cdot2^{-d}\prod_{i\in S'}v'_{i}x_{T(v'_{1}v'_{2}\ldots v'_{i-1})}\\
\times\prod_{i\in S''}v''_{i}x_{T(v'v_{1}''v_{2}''\ldots v''_{i-1})}.
\end{multline*}
Since $T(v'v'')=T_{v'}(v'')$ and $T(v'v_{1}''v_{2}''\ldots v''_{i-1})=T_{v'}(v''_{1}v_{2}''\ldots v''_{i-1}),$
we arrive at
\begin{multline*}
T|_{\Scal}=\sum_{S'\in\Scal'}\sum_{v'\in\pomo^{d'}}2^{-d'}\prod_{i\in S'}v'_{i}x_{T(v'_{1}v'_{2}\ldots v'_{i-1})}\\
\quad\times\left(\sum_{S''\in\binom{\{1,2,\ldots,|I_{\ell}|\}}{k_{\ell}}}\sum_{v''\in\pomo^{d-d'}}\!T_{v'}(v'')\cdot2^{-d+d'}\prod_{i\in S''}v''_{i}x_{T_{v'}(v_{1}''v_{2}''\ldots v''_{i-1})}\right).
\end{multline*}
The large parenthesized expression is by definition $T_{v'}|_{\binom{\{1,2,\ldots,|I_{\ell}|\}}{k_{\ell}}}=T'(v'),$
whence
\begin{align}
T|_{\Scal} & =\sum_{S'\in\Scal'}\sum_{v'\in\pomo^{d'}}T'(v')\cdot2^{-d'}\prod_{i\in S'}v'_{i}x_{T(v'_{1}v'_{2}\ldots v'_{i-1})}\nonumber \\
 & =\sum_{S'\in\Scal'}\sum_{v'\in\pomo^{d'}}T'(v')\cdot2^{-d'}\prod_{i\in S'}v'_{i}x_{T'(v'_{1}v'_{2}\ldots v'_{i-1})}\nonumber \\
 & =T'|_{\Scal'}.\label{eq:T-T-prime-Scal-prime}
\end{align}
By~(\ref{eq:T-prime-Scal-prime-decomposition}) and~(\ref{eq:T-T-prime-Scal-prime}),
the proof is complete.
\end{proof}
\begin{proof}[Proof of Claim~\emph{\ref{claim:T-j-norm}}]
Recall from Lemma~\ref{lem:T_S-spectrum} that $T_{v}|_{\binom{\{1,2,\dots,|I_{\ell}|\}}{k_{\ell}}}$
is the $k_{\ell}$-th homogeneous part of the function computed by
the subtree $T_{v}$ of $T.$ This implies that $T_{j}'\in\Tcal(n,d',\dns(T_{j}'),k_{\ell})$.
Moreover, every nonzero leaf $v$ of $T_{j}'$ has norm
\begin{align*}
\NORM{T_{v}|_{\binom{\{1,2,\ldots,|I_{\ell}|\}}{k_{\ell}}}} & \leq2C^{k_{\ell}}\binom{|I_{\ell}|}{k_{\ell}}^{1/2}\Lambda_{n^{2},k_{\ell}}(\dns(T_{v}))\\
 & \leq2C^{k_{\ell}}\binom{|I_{\ell}|}{k_{\ell}}^{1/2}\Lambda_{n^{2},k_{\ell}}(3^{-j}),
\end{align*}
where the first step applies the inductive hypothesis to the tree
$T_{v}$ of depth $|I_{\ell}|,$ and the second step is legitimate
by the monotonicity of $\Lambda_{n^{2},k_{\ell}}$ (Lemma~\ref{lem:Lambda-cont-monotone-concave}).
Now Lemma~\ref{lem:collect-like-leaves} gives, for each $i=1,2,\ldots,\binom{n}{k_{\ell}},$
a real number $0\leq p_{i}\leq1$ and a decision tree $U_{j,i}\in\Tcal^{*}(n,d',p_{i},0)$
such that
\begin{align}
 & \dns(T_{j}')=\sum_{i=1}^{\binom{n}{k_{\ell}}}p_{i},\label{eq:T-j-prime-probabilities}\\
 & \norm{T_{j}'|_{\Scal'}}\leq2C^{k_{\ell}}\binom{|I_{\ell}|}{k_{\ell}}^{1/2}\Lambda_{n^{2},k_{\ell}}(3^{-j})\sum_{i=1}^{\binom{n}{k_{\ell}}}\norm{U_{j,i}|_{\Scal'}}.\label{eq:T-j-prime-collect}
\end{align}
Applying the inductive hypothesis to each $U_{j,i}|_{\Scal'}$ gives
\begin{align}
\sum_{i=1}^{\binom{n}{k_{\ell}}}\norm{U_{j,i}|_{\Scal'}} & \leq2C^{k-k_{\ell}}\,12^{\ell-2}\sqrt{\binom{|I_{1}|}{k_{1}}\cdots\binom{|I_{\ell-1}|}{k_{\ell-1}}}\,\sum_{i=1}^{\binom{n}{k_{\ell}}}\Lambda_{n^{2},k-k_{\ell}}(p_{i}).\label{eq:U-j-i-apply-IH}
\end{align}
The final summation can be bounded via
\begin{align}
\sum_{i=1}^{\binom{n}{k_{\ell}}}\Lambda_{n^{2},k-k_{\ell}}(p_{i}) & \leq\binom{n}{k_{\ell}}\cdot\Lambda_{n^{2},k-k_{\ell}}\left(\binom{n}{k_{\ell}}^{-1}\sum_{i=1}^{\binom{n}{k_{\ell}}}p_{i}\right)\nonumber \\
 & =n^{2}\cdot\frac{1}{n^{2}}\binom{n}{k_{\ell}}\cdot\Lambda_{n^{2},k-k_{\ell}}\left(\binom{n}{k_{\ell}}^{-1}\dns(T_{j}')\right)\nonumber \\
 & \leq n^{2}\Lambda_{n^{2},k-k_{\ell}}\left(\frac{\dns(T_{j}')}{n^{2}}\right),\label{eq:sum-of-Lambdas}
\end{align}
where the first step is valid by Lemma~\ref{lem:Lambda-cont-monotone-concave}~(iii);
the second step is a substitution from~(\ref{eq:T-j-prime-probabilities});
and the third step uses $k_{\ell}\leq2$ along with Corollary~\ref{cor:Lambda-divided-p-monotone}.
Now
\begin{align*}
\!\!\!\!\!\!\!\!\!\!\!\!\!\norm{T_{j}'|_{\Scal'}} & \leq4C^{k}\,12^{\ell-2}\sqrt{\binom{|I_{1}|}{k_{1}}\cdots\binom{|I_{\ell}|}{k_{\ell}}}\cdot\Lambda_{n^{2},k_{\ell}}(3^{-j})\cdot n^{2}\Lambda_{n^{2},k-k_{\ell}}\left(\frac{\dns(T_{j}')}{n^{2}}\right)\\
 & \leq8C^{k}\,12^{\ell-2}\sqrt{\binom{|I_{1}|}{k_{1}}\cdots\binom{|I_{\ell}|}{k_{\ell}}}\cdot\frac{\Lambda_{n^{2},k_{\ell}}(\sqrt{3^{-j}})}{\sqrt{3^{j}}}\cdot n^{2}\Lambda_{n^{2},k-k_{\ell}}\left(\frac{\dns(T_{j}')}{n^{2}}\right)\\
 & \leq8C^{k}\,12^{\ell-2}\sqrt{\binom{|I_{1}|}{k_{1}}\cdots\binom{|I_{\ell}|}{k_{\ell}}}\cdot\sqrt{3^{-j}}\Lambda_{n^{2},k}(\sqrt{3^{-j}}\dns(T_{j}')),
\end{align*}
where the first step combines~(\ref{eq:T-j-prime-collect})\textendash (\ref{eq:sum-of-Lambdas});
the second step uses~$k_{\ell}\leq2$ and Corollary~\ref{lem:Lambda-sqrt};
and the third step applies Corollary~\ref{cor:Lambda-collapsing}.
The proof of the claim is complete. We note that the final appeal to
Corollary~\ref{cor:Lambda-collapsing} is the reason why our Fourier weight
bound features $\Lambda_{n^2,k}$ rather than
$\Lambda_{n,k}.$
\end{proof}

\subsection{\label{subsec:Generalization-to-elementary}Generalization to elementary
families}

The result on the Fourier spectrum of decision
trees that we have just established
(Theorem~\ref{thm:norm-canonical}) holds only for
elementary families of special form, described at the
beginning of Section~\ref{subsec:Contiguous-intervals}.
We now generalize Theorem~\ref{thm:norm-canonical} to arbitrary
elementary families.
\begin{theorem}
\label{thm:norm-elementary}Let $T\in\Tcal^{*}(n,d,p,0)$ be given,
for some $0\leq p\leq1$ and integers $n,d\geq1$. Let $k$ be an
integer with $1\leq k\leq d.$ Then every elementary family $\Ecal\subseteq\Pcal_{d,k}$
satisfies
\begin{equation}
\NORM{T|_{\Ecal}}\leq(12C)^{k}\Lambda_{n^{2},k}(p)\sqrt{|\Ecal|},\label{eq:norm-elementary}
\end{equation}
where $C\geq1$ is the absolute constant from Theorem~\emph{\ref{thm:tal-depth-2}}.
\end{theorem}

\begin{proof}
If $\Ecal=\varnothing,$ then $T|_{\Ecal}\equiv0$ and the claimed
upper bound holds trivially. In the complementary case of nonempty
$\Ecal$, let $\ell$ be the minimum positive integer such that 
\begin{equation}
\Ecal=\binom{I_{1}}{k_{1}}*\binom{I_{2}}{k_{2}}*\cdots*\binom{I_{\ell}}{k_{\ell}}\label{eq:Ecal-minimality}
\end{equation}
for some pairwise disjoint integer intervals $I_{1},I_{2},\ldots,I_{\ell}$
and some $k_{1},k_{2},\ldots,k_{\ell}\in\{0,1,2\}$. Since $\Ecal\ne\varnothing,$
Proposition~\ref{prop:ast-properties}~(i) implies that $\binom{I_{j}}{k_{j}}\ne\varnothing$
for all $j$ and therefore 
\begin{align}
|I_{j}| & \geq k_{j}, &  & j=1,2,\ldots,\ell.\label{eq:Ij-kj}
\end{align}
The reader will recall from the definition of the $*$ operator that
\begin{align}
 & |\Ecal|=\prod_{j=1}^{\ell}\binom{|I_{j}|}{k_{j}},\label{eq:Ecal-product}\\
 & k=\sum_{j=1}^{\ell}k_{j}.
\end{align}
Since we chose a representation~(\ref{eq:Ecal-minimality}) with
the minimum $\ell,$ Proposition~\ref{prop:ast-properties}~(ii)
additionally implies that $\binom{I_{j}}{k_{j}}\ne\{\varnothing\}$
for all $j,$ forcing 
\begin{align}
k_{j} & \in\{1,2\}, &  & j=1,2,\ldots,\ell.\label{eq:kj-nonzero}
\end{align}
The previous two equations yield
\begin{equation}
\ell\leq k.\label{eq:ell-k}
\end{equation}
It follows from~(\ref{eq:Ij-kj}) and~(\ref{eq:kj-nonzero}) that
each $I_{j}$ is a nonempty subset of $\{1,2,\ldots,d\}$. Furthermore,
by renumbering the intervals if necessary, we may assume that $I_{1}<I_{2}<\cdots<I_{\ell}.$
We abbreviate $I=I_{1}\cup I_{2}\cup\cdots\cup I_{\ell}$ and $\overline{I}=\{1,2,\ldots,d\}\setminus I.$

It is obvious that every string $v\in\pomo^{d}$ is uniquely determined
by its substrings $v|_{I}$ and $v|_{\overline{I}}$. Similarly, for
every $i\in I,$ the prefix $v_{1}v_{2}\ldots v_{i-1}$ is uniquely
determined by the substrings $(v_{1}v_{2}\ldots v_{i-1})|_{I}$ and
$v|_{\overline{I}}.$ This means in particular that
\begin{align}
 & T(v)=U_{v|_{\overline{I}}}(v|_{I}), &  & v\in\pomo^{d}\label{eq:T-change-variable-order-whole-v}\\
 & T(v_{1}v_{2}\ldots v_{i-1})=U_{v|_{\overline{I}}}((v_{1}v_{2}\ldots v_{i-1})|_{I}), &  & v\in\pomo^{d},\;\;i\in I,\label{eq:T-change-variable-order}
\end{align}
where $\{U_{w}\colon w\in\pomo^{|\overline{I}|}\}$ is a suitable
collection of decision trees of depth $I$. By definition,
\begin{align}
U_{w} & \in\Tcal^{*}(n,|I|,\dns(U_{w}),0), &  & w\in\pomo^{|\overline{I}|}.\label{eq:U-w-IH}
\end{align}
Moreover, the densities of the $U_{w}$ are related in a natural way
to the density of $T.$ Indeed, considering a uniformly random string
$v\in\pomo^{d}$ in~(\ref{eq:T-change-variable-order-whole-v}) gives
$\Prob[T(v)\ne0]=\Prob[U_{v|_{\overline{I}}}(v|_{I})\ne0],$ which
is equivalent to
\begin{equation}
\dns(T)=\Exp\dns(U_{v|_{\overline{I}}}).\label{eq:dns-U-w}
\end{equation}

In what follows, all expectations are with respect to uniformly random
$v\in\pomo^{d}.$ We have:
\begin{align*}
T|_{\Ecal} & =\Exp\left[\sum_{S\in\Ecal}T(v)\prod_{i\in S}v_{i}x_{T(v_{1}v_{2}\ldots v_{i-1})}\right]\\
 & =\Exp\left[\sum_{S_{1}\in\binom{I_{1}}{k_{1}}}\cdots\sum_{S_{\ell}\in\binom{I_{\ell}}{k_{\ell}}}T(v)\prod_{j=1}^{\ell}\prod_{i\in S_{j}}v_{i}x_{T(v_{1}v_{2}\ldots v_{i-1})}\right]\\
 & =\Exp\left[\sum_{S_{1}\in\binom{I_{1}}{k_{1}}}\cdots\sum_{S_{\ell}\in\binom{I_{\ell}}{k_{\ell}}}U_{v|_{\overline{I}}}(v|_{I})\prod_{j=1}^{\ell}\prod_{i\in S_{j}}v_{i}x_{U_{v|_{\overline{I}}}((v_{1}v_{2}\ldots v_{i-1})|_{I})}\right],
\end{align*}
where the last step uses~(\ref{eq:T-change-variable-order-whole-v})
and (\ref{eq:T-change-variable-order}). It remains to shift the indexing
variable $i$. For this, let $I_{1}'<I_{2}'<\cdots<I_{\ell}'$ denote
the integer intervals that form a partition of $\{1,2,\ldots,|I|\}$
and satisfy $|I_{j}'|=|I_{j}|$ for all $j.$ Now the previous equation
for $T|_{\Ecal}$ can be restated as
\begin{align}
T|_{\Ecal} & =\Exp\left[\sum_{S_{1}\in\binom{I_{1}'}{k_{1}}}\cdots\sum_{S_{\ell}\in\binom{I_{\ell}'}{k_{\ell}}}U_{v|_{\overline{I}}}(v|_{I})\prod_{j=1}^{\ell}\prod_{i\in S_{j}}(v|_{I})_{i}\cdot x_{U_{v|_{\overline{I}}}((v|_{I})_{\leq i-1})}\right]\nonumber \\
 & =\Exp\left[U_{v|_{\overline{I}}}|_{\binom{I_{1}'}{k_{1}}*\cdots*\binom{I_{\ell}'}{k_{\ell}}}\right].\label{eq:T-Exp-Uw}
\end{align}
As a result,
\begin{align*}
\norm{T|_{\Ecal}} & \leq\Exp\NORM{U_{v|_{\overline{I}}}|_{\binom{I_{1}'}{k_{1}}*\cdots*\binom{I_{\ell}'}{k_{\ell}}}}\\
 & \leq\Exp\left[2C^{k}\,12^{\ell-1}\Lambda_{n^{2},k}(\dns(U_{v|_{\overline{I}}}))\prod_{i=1}^{\ell}\binom{|I_{i}'|}{k_{i}}^{1/2}\right]\\
 & =2C^{k}\,12^{\ell-1}\Exp\left[\Lambda_{n^{2},k}(\dns(U_{v|_{\overline{I}}}))\prod_{i=1}^{\ell}\binom{|I_{i}|}{k_{i}}^{1/2}\right]\\
 & =2C^{k}\,12^{\ell-1}\sqrt{|\Ecal|}\,\Exp\left[\Lambda_{n^{2},k}(\dns(U_{v|_{\overline{I}}}))\right]\\
 & \leq2C^{k}\,12^{\ell-1}\sqrt{|\Ecal|}\,\Lambda_{n^{2},k}(\Exp\dns(U_{v|_{\overline{I}}}))\\
 & \leq(12C)^{k}\sqrt{|\Ecal|}\,\Lambda_{n^{2},k}(\dns(T)),
\end{align*}
where the first step applies Proposition~\ref{prop:fourier-norm-convex}
to~(\ref{eq:T-Exp-Uw}); the second step is justified by~(\ref{eq:U-w-IH})
and Theorem~\ref{thm:norm-canonical}; the fourth step is a substitution
from~(\ref{eq:Ecal-product}); the fifth step is legitimate by Lemma~\ref{lem:Lambda-cont-monotone-concave}~(iii);
and the final step uses~(\ref{eq:ell-k}) and~(\ref{eq:dns-U-w}).
Since $T$ has density $p$ by hypothesis, the proof is complete.
\end{proof}

\subsection{\label{subsec:Main-result}Main result on decision trees}

We now obtain our main result on the Fourier spectrum of decision
trees by combining Theorem~\ref{thm:norm-elementary} with an efficient
decomposition of $\Pcal_{d,k}$ into elementary families (Theorem~\ref{thm:pi-n-choose-k}).
\begin{theorem}
\label{thm:main-fourier-DT-p}Let $f\colon\pomon\to\{-1,0,1\}$ be
a function computable by a decision tree of depth $d.$ Define $p=\Prob_{x\in\pomon}[f(x)\ne0].$
Then 
\begin{align*}
\norm{L_{k}f} & \leq\binom{d}{k}^{1/2}(58Cc)^{k}\,\Lambda_{n^{2},k}(p), &  & k=1,2,\ldots,n,
\end{align*}
where $C\geq1$ and $c\geq1$ are the absolute constants from Theorem~\emph{\ref{thm:tal-depth-2}}
and Lemma~\emph{\ref{lem:sqrt-double-binomial-sum}, respectively.}
\end{theorem}

\begin{proof}
Lemma~\ref{lem:T_S-spectrum} ensures that $L_{k}f=0$ for $k>d,$
so that the theorem holds vacuously in that case. We now examine the
complementary possibility, $1\leq k\leq d.$ For some integer $N\geq1,$
Theorem~\ref{thm:pi-n-choose-k} gives a partition $\Pcal_{d,k}=\bigcup_{i=1}^{N}\Ecal_{i}$
where $\Ecal_{1},\Ecal_{2},\ldots,\Ecal_{N}$ are elementary families
with 
\begin{equation}
\sum_{i=1}^{N}|\Ecal_{i}|^{1/2}\leq(2+2\sqrt{2})^{k}c^{k}\left(\frac{d}{k}\right)^{k/2}.\label{eq:elementary-d-k}
\end{equation}
Fix a decision tree $T$ of depth $d$ that computes $f.$ Then Fact~\ref{fact:dns-prob-nonzero}
shows that $T\in\Tcal^{*}(n,d,p,0).$ As a result,
\begin{align*}
\norm{L_{k}f} & =\norm{T|_{\Pcal_{d,k}}}\\
 & =\NORM{\sum_{i=1}^{N}T|_{\Ecal_{i}}}\\
 & \leq\sum_{i=1}^{N}\norm{T|_{\Ecal_{i}}}\\
 & \leq\sum_{i=1}^{N}(12C)^{k}\,\Lambda_{n^{2},k}(p)\sqrt{|\Ecal_{i}|}\\
 & \leq\left(\frac{d}{k}\right)^{k/2}(58Cc)^{k}\,\Lambda_{n^{2},k}(p),
\end{align*}
where the first step is valid by Lemma~\ref{lem:T_S-spectrum}; the
second step uses Proposition~\ref{prop:T_S-additive}; the third
step uses Proposition~\ref{prop:fourier-norm-convex}; the fourth
step applies Theorem~\ref{thm:norm-elementary}; and the final step
substitutes the upper bound from~(\ref{eq:elementary-d-k}). In view
of~(\ref{eq:entropy-bound-binomial}), the proof is complete.
\end{proof}
\noindent Maximizing over $0\leq p\leq1,$ we establish the following
clean bound conjectured by Tal~\cite{Tal19Query}.
\begin{corollary}
\label{cor:main-Fourier-dt}Let $f\colon\pomon\to\{-1,0,1\}$ be a
function computable by a decision tree of depth $d.$ Then
\begin{align*}
\norm{L_{k}f} & \leq C^{k}\sqrt{\binom{d}{k}(1+\ln n)^{k-1}}, &  & k=1,2,\ldots,n,
\end{align*}
where $C\geq1$ is an absolute constant.
\end{corollary}

\begin{proof}
Recall from Lemma~\ref{lem:Lambda-cont-monotone-concave}~(ii) that
$\Lambda_{n^{2},k}(p)\leq
\Lambda_{n^2,k}(1)=
\sqrt{(\ln en^{2})^{k-1}}$ for all $0\leq p\leq1.$
Now the claimed bound is immediate from Theorem~\ref{thm:main-fourier-DT-p}
after a change of constant~$C.$
\end{proof}
\noindent Corollary~\ref{cor:main-Fourier-dt} settles Theorem~\ref{thm:MAIN-Fourier-bound}
from the introduction. By convexity (Proposition~\ref{prop:fourier-norm-convex}),
Corollary~\ref{cor:main-Fourier-dt} holds more generally for any
real function $f\colon\pomon\to[-1,1]$ computable by a decision tree
of depth $d.$
We record the following generalization for functions with range $[-1,1].$

\begin{corollary}
Let $f\colon\pomon\to[-1,1]$ be a
function computable by a decision tree of depth $d.$ Then
\begin{align*}
\norm{L_{k}f} & \leq C^{k}\sqrt{\binom{d}{k}(1+\ln n)^{k-1}}, &  & k=1,2,\ldots,n,
\end{align*}
where $C\geq1$ is an absolute constant.
\end{corollary}

\begin{proof}
The proof is a reprise of Lemma~\ref{lem:collect-like-leaves}.
Any real number in $[-1,1]$ is a convex combination of $-1$ and $1.$ With this
in mind, the idea is to express a decision tree for $f$
as a convex combination of decision trees with leaf
labels in $\pomo,$ then bound the $k$-Fourier weight
for each of them via
Corollary~\ref{cor:main-Fourier-dt}, and finally infer
a bound on the $k$-Fourier weight of $f$ by convexity.

Formally, let $T$ be a depth-$d$ decision tree that computes $f.$ Let
$\mathbf{T}$ be a random depth-$d$ decision tree with
leaf labels in $\pomo$ such that 
\begin{align*}
&\mathbf{T}(v)=T(v),&&v\in\pomo^{\leq d-1},\\
&\Exp \mathbf{T}(v)=T(v), &&v\in\pomo^d.
\end{align*}
By definition, $\mathbf T$ computes a function 
$f_{\mathbf T}\colon\pomon\to\pomo,$ where $\Exp f_{\mathbf T}=f.$ 
Now for each $k=1,2,\ldots,n,$
\begin{align*}
\norm{L_{k}f} &= \norm{L_{k}\left(\Exp f_{\mathbf T}\right)}\\
  &= \norm{\Exp L_{k} f_{\mathbf T}}\\
  &\leq \Exp \norm{L_{k} f_{\mathbf T}}\\
  & \leq \Exp\; C^{k}\sqrt{\binom{d}{k}(1+\ln n)^{k-1}} \\
  & = C^{k}\sqrt{\binom{d}{k}(1+\ln n)^{k-1}},
\end{align*}
where the second step uses the linearity of $L_k,$ the third step 
applies Proposition~\ref{prop:fourier-norm-convex}, and the fourth 
step is valid for some absolute constant $C\geq1$ by
Corollary~\ref{cor:main-Fourier-dt}.
\end{proof}

\section{Quantum versus classical query complexity}
\label{sec:quantum}

Using our newly derived bound for the Fourier spectrum of decision
trees, we will now prove the main result of this paper on quantum
versus randomized query complexity.

\subsection{Quantum and randomized query models}

For a nonempty finite set $X,$ a \emph{partial Boolean function on
$X$} is a mapping $X\to\{0,1,*\},$ where the output value $*$ is
reserved for illegal inputs. Recall that a \emph{randomized query
algorithm of cost $d$} is a probability distribution on decision
trees of depth at most $d.$ For a (possibly partial) Boolean function
$f$ on the Boolean hypercube, we say that a randomized query algorithm
\emph{computes $f$ with error $\epsilon$} if, for every input $x\in f^{-1}(0)\cup f^{-1}(1),$
the algorithm outputs $f(x)$ with probability at least $1-\epsilon.$
Observe that in this formalism, the algorithm is allowed to exhibit
arbitrary behavior on the illegal inputs, namely, those in $f^{-1}(*).$
The \emph{randomized query complexity $R_{\epsilon}(f)$} is the minimum
cost of a randomized query algorithm that computes $f$ with error
$\epsilon.$ The canonical setting of the error parameter is $\epsilon=1/3.$
This choice is largely arbitrary because the error of a query algorithm
can be reduced in an efficient manner by running the algorithm several
times independently and outputting the majority answer. Quantitatively,
the following relation follows from the Chernoff bound:
\begin{align}
R_{\epsilon}(f) & \leq O\left(\frac{1}{\gamma^{2}}\log\frac{1}{\epsilon}\right)\cdot R_{\frac{1}{2}-\gamma}(f)\label{eq:error-reduction}
\end{align}
for all $\epsilon,\gamma\leq1/2.$

These classical definitions carry over in the obvious way to the quantum
model. Here, the cost is the worst-case number of quantum queries
on any input, and a quantum algorithm is said to \emph{compute $f$
with error $\epsilon$} if, for every input $x\in f^{-1}(0)\cup f^{-1}(1),$
the algorithm outputs $f(x)$ with probability at least $1-\epsilon.$
The \emph{quantum query complexity} $Q_{\epsilon}(f)$ is the minimum
cost of a quantum query algorithm that computes $f$ with error $\epsilon.$
For an excellent introduction to classical and quantum query complexity,
we refer the reader to~\cite{buhrman-dewolf02DT-survey} and~\cite{dewolf-thesis},
respectively.

\subsection{\label{subsec:rorrelation}The rorrelation problem}

We now formally state the problem of interest to us, Tal's \emph{rorrelation}~\cite{Tal19Query},
which was briefly reviewed in the introduction. Let $n$ and $k$
be positive integers. For an orthogonal matrix $U\in\Re^{n\times n},$
consider the multilinear polynomial $\phi_{n,k,U}\colon(\pomon)^{k}\to\Re$
given by
\begin{equation}
\phi_{n,k,U}(x_{1},x_{2},\ldots,x_{k})=\frac{1}{n}\mathbf{1}^{\intercal}D_{x_{1}}UD_{x_{2}}UD_{x_{3}}U\cdots UD_{x_{k}}\mathbf{1},\label{eq:rorrelation-restated}
\end{equation}
where $\1$ denotes the all-ones vector and $D_{x_{i}}$ denotes the
diagonal matrix with vector $x_{i}$ on the diagonal. In what follows,
we treat the sets $(\{-1,1\}^{n})^{k}$ and $\pomo^{n\times k}$ interchangeably,
thereby interpreting the input to $\phi_{n,k,U}$ as an $n\times k$
sign matrix. Let $\|\cdot\|_{2}$ denote the Euclidean norm. Then
for all $x_{1},x_{2},\ldots,x_{k}\in\pomon,$ we have
\begin{align}
|\phi_{n,k,U}(x_{1},x_{2},\ldots,x_{k})| & =\frac{1}{n}\langle\1,D_{x_{1}}UD_{x_{2}}UD_{x_{3}}U\cdots UD_{x_{k}}\1\rangle\nonumber \\
 & \leq\frac{1}{n}\|\1\|_{2}\;\|D_{x_{1}}UD_{x_{2}}UD_{x_{3}}U\cdots UD_{x_{k}}\1\|_{2}\nonumber \\
 & =\frac{1}{n}\|\1\|_{2}\;\|\1\|_{2}\nonumber \\
 & =1,\label{eq:rorrelation-bounded}
\end{align}
where the second step applies the Cauchy\textendash Schwarz inequality,
and the third step is valid because each of the matrices involved
preserves the Euclidean norm. In particular, the multivariate polynomial
$\phi_{n,k,U}$ ranges in $[-1,1]$ for all inputs. Generalizing the
forrelation problem of Aaronson and Ambainis~\cite{AA15forrelation},
Tal~\cite{Tal19Query} considered the partial Boolean function $f_{n,k,U}\colon\pomo^{n\times k}\to\{0,1,*\}$
given by
\[
f_{n,k,U}(x)=\begin{cases}
1 & \text{if }\phi_{n,k,U}(x)\geq2^{-k},\\
0 & \text{if }|\phi_{n,k,U}(x)|\leq2^{-k-1},\\
* & \text{otherwise.}
\end{cases}
\]
Aaronson and Ambainis~\cite{AA15forrelation} showed that there is
a quantum algorithm with $\lceil k/2\rceil$ queries whose acceptance
probability on input $x\in\pomo^{n\times k}$ is $(\phi_{n,k,H}(x)+1)/2$,
where $H$ is the Hadamard transform matrix. Their analysis generalizes
to any orthogonal matrix in place of $H,$ to the following effect.
\begin{fact}[{Tal~\cite[Claim~3.1]{Tal19Query}}]
\label{fact:rorrelation-algorithm} Let $n$ and $k$ be positive
integers, where $n$ is a power of $2.$ Let $U$ be an arbitrary
orthogonal matrix. Then there is a quantum query algorithm with $\lceil k/2\rceil$
queries whose acceptance probability on input $x\in\pomo^{n\times k}$
equals $(\phi_{n,k,U}(x)+1)/2.$
\end{fact}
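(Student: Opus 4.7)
The plan is to adapt the original quantum algorithm of Aaronson and Ambainis for $k$-forrelation, replacing the Hadamard matrix by the arbitrary orthogonal matrix $U$. The core idea is to express $\phi_{n,k,U}(x)$ as an inner product $\langle \alpha \mid U \mid \beta \rangle$ of two unit vectors, each preparable with only about $k/2$ phase-oracle queries, and then read off this real-valued inner product using a Hadamard (interferometric) test.

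First, set $m = \lceil k/2 \rceil$ and define the two vectors
\begin{align*}
|\alpha\rangle &= \tfrac{1}{\sqrt{n}}\, D_{x_m} U^\intercal D_{x_{m-1}} U^\intercal \cdots U^\intercal D_{x_1} \mathbf{1},\\
|\beta\rangle &= \tfrac{1}{\sqrt{n}}\, D_{x_{m+1}} U D_{x_{m+2}} U \cdots U D_{x_k} \mathbf{1}.
\end{align*}
Both are unit vectors because every factor preserves Euclidean length. Using $D_{x_i}^\intercal = D_{x_i}$ and the orthogonality of $U$, a direct manipulation yields $\langle \alpha | U | \beta \rangle = \phi_{n,k,U}(x)$ regardless of the parity of $k$.

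Next, apply a Hadamard test. Introduce a single ancilla qubit, prepare the state $\tfrac{1}{\sqrt{2}}\bigl( |0\rangle \otimes |\alpha\rangle + |1\rangle \otimes U |\beta\rangle \bigr)$, Hadamard the ancilla, and accept on outcome $|0\rangle$. Since $|\alpha\rangle$ and $U|\beta\rangle$ are unit vectors and the inner product is real, the acceptance probability is
\[
\tfrac{1}{4}\bigl\| |\alpha\rangle + U |\beta\rangle \bigr\|^2 = \tfrac{1}{2}\bigl(1 + \langle \alpha | U | \beta \rangle\bigr) = \tfrac{1}{2}\bigl(1 + \phi_{n,k,U}(x)\bigr),
\]
as claimed.

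It remains to prepare the interferometric state with only $m$ queries. Starting from $\tfrac{1}{\sqrt{2n}}(|0\rangle + |1\rangle)\otimes \mathbf{1}$, perform $m$ rounds; in round $i$, apply a single controlled phase oracle $C_i$ that, conditioned on the ancilla, queries $x_i$ on the $|0\rangle$-branch and $x_{k+1-i}$ on the $|1\rangle$-branch (for $k$ odd and $i = m$, the $|1\rangle$-branch performs the identity instead). Between consecutive rounds, apply the query-free controlled unitary that acts as $U^\intercal$ on the $|0\rangle$-branch and $U$ on the $|1\rangle$-branch, and (for $k$ even) append a final controlled $U$ on the $|1\rangle$-branch after round $m$. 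The crucial point is that $C_i$ is still a single quantum query to $x \in \{\pm 1\}^{n\times k}$: it reads off one matrix entry with both row and column indices in superposition. The main obstacle is thus purely bookkeeping: verifying that after $m$ rounds the state equals $\tfrac{1}{\sqrt{2}}(|0\rangle \otimes |\alpha\rangle + |1\rangle \otimes U|\beta\rangle)$, and aligning the parities in the odd-$k$ case so that one branch consumes exactly $\lceil k/2 \rceil$ queries and the other exactly $\lfloor k/2 \rfloor$. No essentially new idea is required beyond Aaronson and Ambainis; the sole adaptation for arbitrary orthogonal $U$ is the appearance of $U^\intercal = U^{-1}$ on one branch, which is a legitimate quantum operation precisely because $U$ is orthogonal.
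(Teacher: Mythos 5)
Your proof is correct and follows the standard Aaronson--Ambainis approach as adapted by Tal: split $\phi_{n,k,U}$ into the inner product $\langle\alpha|U|\beta\rangle$ of two states, each preparable with at most $\lceil k/2\rceil$ controlled phase queries, and read off the real inner product via a Hadamard test. The paper itself simply cites Tal's Claim~3.1 and does not reproduce the argument, so your write-up is a faithful expansion of the referenced proof rather than a different route.
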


\begin{corollary}
\label{cor:rorrelation-algorithm}Let $n$ and $k$ be positive integers,
where $n$ is a power of $2.$ Let $U$ be an arbitrary orthogonal
matrix. Then
\begin{equation}
Q_{\frac{1}{2}-\frac{1}{2^{k+4}}}(f_{n,k,U})\leq\left\lceil \frac{k}{2}\right\rceil .\label{eq:q-weak-alg}
\end{equation}
In particular,
\begin{equation}
Q_{1/3}(f_{n,k,U})\leq O(k4^{k}).\label{eq:q-upper-bounded-error}
\end{equation}
\end{corollary}

\begin{proof}
On input $x,$ the query algorithm for~(\ref{eq:q-weak-alg}) is
as follows: with probability $p,$ run the algorithm of Fact~\ref{fact:rorrelation-algorithm}
and output the resulting answer; with complementary probability $1-p,$
output ``no'' regardless of $x$. By design, the proposed solution
has query cost at most $\lceil k/2\rceil$ and accepts $x$ with probability
exactly
\[
p\cdot\frac{\phi_{n,k,U}(x)+1}{2}.
\]
We want this quantity to be at most $\frac{1}{2}-2^{-k-4}$ if $\phi_{n,k,U}(x)\leq2^{-k-1}$,
and at least $\frac{1}{2}+2^{-k-4}$ if $\phi_{n,k,U}(x)\geq2^{-k}.$
These requirements are both met for $p=(1+\frac{3}{2^{k+2}})^{-1}.$
In summary, $f_{n,k,U}$ has a query algorithm with error at most
$\frac{1}{2}-2^{-k-4}$ and query cost $\lceil k/2\rceil.$ To reduce
the error to $1/3,$ run this algorithm independently $\Theta(4^{k})$
times and output the majority answer; cf.~(\ref{eq:error-reduction}).
\end{proof}
Corollary~\ref{cor:rorrelation-algorithm} shows that the rorrelation
problem has small quantum query complexity. By contrast, we will show
that its randomized complexity is essentially the maximum possible.
Specifically, we will prove an optimal, near-linear lower bound on
the randomized query complexity of rorrelation by combining Tal's
work~\cite{Tal19Query} with our near-optimal bounds for the Fourier
spectrum of decision trees.

In what follows, we let $\mathcal{U}_{n,k}$ denote the uniform probability
distribution on $\pomo^{n\times k}.$ Applying Parseval's identity
to the multilinear polynomial $\phi_{n,k,U}$ gives:
\begin{fact}[{Tal~\cite[Claim~4.4]{Tal19Query}}]
\label{fact:phi-Uniform}$\Exp_{x\sim\Ucal_{n,k}}[\phi_{n,k,U}(x)^{2}]=1/n.$
\end{fact}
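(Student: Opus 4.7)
The plan is to compute the expectation directly by expanding $\phi_{n,k,U}$ as a multilinear polynomial in the variables $(x_j)_i$, applying Parseval's identity, and then using orthogonality of $U$ to telescope the resulting sum.

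First I would write out the matrix product explicitly:
\[
\phi_{n,k,U}(x_1,\ldots,x_k) = \frac{1}{n}\sum_{i_1,i_2,\ldots,i_k \in \{1,\ldots,n\}} (x_1)_{i_1} (x_2)_{i_2} \cdots (x_k)_{i_k} \cdot U_{i_1,i_2} U_{i_2,i_3}\cdots U_{i_{k-1},i_k}.
\]
Crucially, for distinct tuples $(i_1,\ldots,i_k)$ the products $\prod_j (x_j)_{i_j}$ are distinct multilinear monomials in the $nk$ input variables, because the variable $(x_j)_{i_j}$ lives in the $j$-th block. Hence the displayed expression is already the Fourier expansion of $\phi_{n,k,U}$, with Fourier coefficient $\frac{1}{n} U_{i_1,i_2}U_{i_2,i_3}\cdots U_{i_{k-1},i_k}$ for the monomial indexed by $(i_1,\ldots,i_k)$.

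Next, by Parseval's identity applied to $\phi_{n,k,U}$ viewed as a function on $\pomo^{n\times k}$ under the uniform distribution $\Ucal_{n,k}$,
\[
\Exp_{x\sim\Ucal_{n,k}}[\phi_{n,k,U}(x)^2] = \frac{1}{n^2}\sum_{i_1,\ldots,i_k} U_{i_1,i_2}^2 U_{i_2,i_3}^2 \cdots U_{i_{k-1},i_k}^2.
\]
Since $U$ is orthogonal, each of its rows has unit Euclidean norm, so $\sum_{j} U_{i,j}^2 = 1$ for every $i$. I would then telescope the sum from the innermost index outward: $\sum_{i_k} U_{i_{k-1},i_k}^2 = 1$, then $\sum_{i_{k-1}} U_{i_{k-2},i_{k-1}}^2 = 1$, and so on down to $\sum_{i_2} U_{i_1,i_2}^2 = 1$, leaving $\sum_{i_1} 1 = n$. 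This yields $\Exp[\phi_{n,k,U}^2] = n/n^2 = 1/n$, as claimed.

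There is essentially no obstacle here: the only subtlety worth checking is that the monomials are genuinely distinct so that Parseval gives the claimed sum of squared coefficients with no cross-terms, which is immediate because the blocks of variables are disjoint. The orthogonality of $U$ then reduces the calculation to a trivial telescoping.
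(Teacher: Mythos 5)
Your proof is correct and matches the approach the paper gestures at: the paper simply cites Tal's Claim 4.4 and prefaces it with the remark that it follows by ``applying Parseval's identity to the multilinear polynomial $\phi_{n,k,U}$,'' which is exactly the expansion-plus-Parseval-plus-telescoping argument you carry out. The key observation — that the monomials $\prod_j (x_j)_{i_j}$ are pairwise distinct because the $k$ blocks of variables are disjoint, so the coefficient of each is exactly $\frac{1}{n}U_{i_1,i_2}\cdots U_{i_{k-1},i_k}$ — is correctly identified, and the row-orthonormality of $U$ collapses the sum of squares to $n/n^2 = 1/n$ as required.
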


\noindent The other result from~\cite{Tal19Query} that we will need
is as follows.
\begin{fact}[{Tal~\cite[Lemmas 5.6, 5.7, and Claim~4.1]{Tal19Query}}]
\label{fact:tal-gaussian} Let $n$ be a positive integer. 
Let $U\in\Re^{n\times n}$ be a uniformly random orthogonal matrix.
Then with probability $1-o(1)$ over the choice of $U,$
there exists 
for every positive integer $k$
a probability distribution
$\Dcal_{n,k,U}$ on $\pomo^{n\times k}$ such that:
\begin{align}
 & \Exp_{x\sim\Dcal_{n,k,U}}\phi_{n,k,U}(x)\geq\left(\frac{2}{\pi}\right)^{k-1},\label{eq:Dcal-yes-instances}\\
 & \Exp_{x\sim\Dcal_{n,k,U}}\prod_{(i,j)\in S}x_{i,j}=0, &  & |S|=1,2,\ldots,k-1,\label{eq:Dcal-Fourier-less-than-k}\\
 & \left|\Exp_{x\sim\Dcal_{n,k,U}}\prod_{(i,j)\in S}x_{i,j}\right|\leq\left(\frac{c|S|\log n}{n}\right)^{\frac{|S|}{2}\cdot\frac{k-1}{k}}, &  & |S|=k,k+1,\ldots,nk,\label{eq:Dcal-Fourier-k-or-higher}
\end{align}
where $c\geq1$ is an absolute constant independent of $n,k,U.$
\end{fact}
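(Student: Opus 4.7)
The approach is to construct $\mathcal{D}_{n,k,U}$ by Gaussian rounding, generalizing the forrelation distribution of Aaronson and Ambainis~\cite{AA15forrelation} from the Hadamard case to an arbitrary orthogonal $U$. First, I would sample a centered jointly Gaussian matrix $g\in\Re^{n\times k}$ with columns $g_1,\ldots,g_k$ that are each marginally standard Gaussian on $\Re^n$, with nearest-neighbor cross-covariances of the form $\Exp[g_j g_{j+1}^{\intercal}] = \varepsilon U$ for a small parameter $\varepsilon>0$ chosen so the joint covariance remains positive semidefinite (and with weaker or zero cross-covariances at larger lags). Then I would define $\mathcal{D}_{n,k,U}$ to be the distribution of $x_{i,j} = \sigma_j\sign(g_{i,j})$, where $\sigma_1,\ldots,\sigma_k\in\{\pm 1\}$ are independent uniform signs included to enforce column-wise sign symmetry.

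For the yes-instance bound~(\ref{eq:Dcal-yes-instances}), I would expand the rorrelation polynomial~(\ref{eq:rorrelation-restated}) entrywise and apply the Gaussian identity $\Exp[\sign(X)\sign(Y)] = \tfrac{2}{\pi}\arcsin(\rho)$ for correlated standard Gaussians with correlation $\rho$. Each of the $k-1$ factors of $U$ inside $\phi_{n,k,U}$ produces one such $\arcsin$-factor; combining $\arcsin(t)\geq t$ with the orthogonality of $U$ and telescoping the resulting chain of inner products gives the claimed lower bound of shape $(2/\pi)^{k-1}$, after absorbing the $\varepsilon$ scaling into the construction.

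For the vanishing of low-order Fourier coefficients~(\ref{eq:Dcal-Fourier-less-than-k}), the plan is to use the column-wise sign symmetry provided by the $\sigma_j$: if $S$ meets some column $j$ in an odd number of cells, averaging over $\sigma_j$ negates the corresponding monomial and forces its expectation to zero. For any $S$ with $|S|<k$, at least one column is either missed entirely or touched in unbalanced parity relative to a companion column; the Hermite expansion of $\sign$, whose only nonzero coefficients sit at odd orders, then rules out the remaining even-parity survivors and collapses the expectation to zero.

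For the upper bound~(\ref{eq:Dcal-Fourier-k-or-higher}), I would expand each $\sign(g_{i,j})$ in its Hermite series and write $\widehat{\mathcal{D}_{n,k,U}}(S)$ as an explicit polynomial in the entries of $U$, with coefficients controlled by standard Gaussian moment estimates. A deterministic bound on this polynomial in terms of norms of iterated products of $U$, combined with Haar-measure Lipschitz concentration on the orthogonal group (which forces polynomial functionals of $U$ to be of order $n^{-1/2}$ with overwhelming probability), would yield the stated estimate simultaneously for all $S$ with probability $1-o(1)$ over $U$. The hard part will be this concentration step: one must union-bound over all $\binom{nk}{|S|}$ sets $S$ of each size while retaining the exponent $|S|(k-1)/(2k)$, which is strictly smaller than $|S|/2$ precisely because iterated products $U^{\ell}$ arising from a single Haar-random $U$ spread their mass less efficiently than independent Haar samples would.
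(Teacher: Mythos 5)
Your proposed construction contains a fatal internal contradiction. You introduce independent column signs $\sigma_1,\ldots,\sigma_k$ and set $x_{i,j}=\sigma_j\sign(g_{i,j})$, so that for any $S$ the Fourier coefficient factors as
\[
\Exp_{\Dcal_{n,k,U}}\prod_{(i,j)\in S}x_{i,j}
=\left(\prod_{j=1}^{k}\Exp[\sigma_j^{|S\cap\operatorname{col}_j|}]\right)\cdot\Exp\prod_{(i,j)\in S}\sign(g_{i,j}),
\]
which vanishes unless every column is touched an even number of times. But the polynomial $\phi_{n,k,U}$ is supported entirely on ``rainbow'' sets that touch each of the $k$ columns exactly once (odd). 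Hence your $\sigma_j$-averaging forces $\Exp_{\Dcal_{n,k,U}}\phi_{n,k,U}=0$, directly contradicting~(\ref{eq:Dcal-yes-instances}). Dropping the $\sigma_j$ does not repair things: with nearest-neighbor covariance $\Exp[g_jg_{j+1}^{\intercal}]=\varepsilon U$, already $|S|=2$ with cells $(i,j),(i',j+1)$ gives $\Exp[\sign(g_{i,j})\sign(g_{i',j+1})]=\frac{2}{\pi}\arcsin(\varepsilon U_{ii'})\neq0$, violating~(\ref{eq:Dcal-Fourier-less-than-k}) for every $k\geq3$. And even in a hypothetical fix preserving sign symmetry, your Hermite-parity argument does not close~(\ref{eq:Dcal-Fourier-less-than-k}): for $k\geq5$ take $S=\{(1,1),(2,1),(1,2),(2,2)\}$, all column-parities even; the lowest Hermite term alone contributes $c_1^{4}\varepsilon^{2}(U_{11}U_{22}+U_{12}U_{21})$, which is generically nonzero and is not cancelled by higher Hermite terms. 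So neither variant of your Gaussian-sign-rounding distribution satisfies~(\ref{eq:Dcal-Fourier-less-than-k}) while also satisfying~(\ref{eq:Dcal-yes-instances}).

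There is also a secondary gap in your argument for~(\ref{eq:Dcal-yes-instances}): the Sheppard/Grothendieck identity $\Exp[\sign(X)\sign(Y)]=\frac{2}{\pi}\arcsin\rho$ is a statement about two Gaussians, whereas the coefficient of each chain in $\phi_{n,k,U}$ requires the joint moment $\Exp[\sign(g_{1,i_1})\cdots\sign(g_{k,i_k})]$ of $k$ correlated Gaussians, which is an orthant probability that does not factor into pairwise $\arcsin$ terms; the ``telescoping'' step you describe is not available. The construction Tal actually uses is different in kind: it does not round a Gaussian vector coordinatewise but rather builds the density $\Dcal_{n,k,U}(x)=2^{-nk}\bigl(1+p(x)\bigr)$ directly from a multilinear polynomial $p$ whose coefficients are read off from a (truncated) Gaussian process, with the low-degree terms excised by fiat; properties~(\ref{eq:Dcal-yes-instances}) and~(\ref{eq:Dcal-Fourier-less-than-k}) are then built into the definition of $p$, and the work lies in showing $1+p(x)\geq0$ and bounding the surviving coefficients~(\ref{eq:Dcal-Fourier-k-or-higher}), which is where the Haar concentration over $U$ enters.
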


\subsection{The quantum-classical separation}

In this section, we derive our lower bound on the randomized query
complexity of the rorrelation problem by combining Tal's Facts~\ref{fact:phi-Uniform}
and~\ref{fact:tal-gaussian} with our main result on decision trees~(Corollary~\ref{cor:main-Fourier-dt}).
The technical centerpiece of this derivation is the following ``indistinguishability''
lemma, which is a polynomial improvement on the analogous calculation
by Tal~\cite[Theorem~5.8]{Tal19Query} that used weaker Fourier bounds
for decision trees.
\begin{lemma}
\label{lem:g-not-a-distinguisher}Let $n$ be a positive integer.
Let $U\in\Re^{n\times n}$ be a uniformly random orthogonal matrix.
Then with probability $1-o(1)$ over the choice of $U,$ the following holds for every
integer $k\geq1$ and every function $g\colon\pomo^{n\times k}\to\{0,1\}$$:$
\begin{equation}
\left|\Exp_{\Ucal_{n,k}}g-\Exp_{\Dcal_{n,k,U}}g\right|\leq\left(cd\cdot\frac{\log^{2-\frac{1}{k}}(n+k)}{n^{1-\frac{1}{k}}}\right)^{k/2},\label{eq:g-not-a-distinguisher}
\end{equation}
where $\Dcal_{n,k,U}$ is as defined in Fact~\emph{\ref{fact:tal-gaussian};}
$\,\,d$ is the minimum depth of a decision tree that computes $g;$
and $c\geq1$ is an absolute constant independent of $n,k,U,g.$
\end{lemma}

\begin{proof}
Fact~\ref{fact:tal-gaussian} guarantees that with probability $1-o(1)$
over the choice of $U,$ there exists for every integer $k\geq1$
a probability distribution $\Dcal_{n,k,U}$ on $\pomo^{n\times k}$
that obeys~(\ref{eq:Dcal-yes-instances})\textendash (\ref{eq:Dcal-Fourier-k-or-higher}).
Conditioned on this event, we will prove~(\ref{eq:g-not-a-distinguisher}).
To start with, fix $g$ and write out the Fourier expansion
\begin{align*}
g(x) & =\sum_{S\subseteq\{1,2,\ldots,n\}\times\{1,2,\ldots,k\}}\hat{g}(S)\prod_{(i,j)\in S}x_{i,j}\\
 & =\sum_{\ell=0}^{nk}\sum_{|S|=\ell}\hat{g}(S)\prod_{(i,j)\in S}x_{i,j}.
\end{align*}
Then
\begin{align*}
\left|\Exp_{\Ucal_{n,k}}g-\Exp_{\Dcal_{n,k,U}}g\right| & \leq\sum_{\ell=0}^{nk}\sum_{|S|=\ell}|\hat{g}(S)|\left|\Exp_{\Ucal_{n,k}}\prod_{(i,j)\in S}x_{i,j}-\Exp_{\Dcal_{n,k,U}}\prod_{(i,j)\in S}x_{i,j}\right|\\
 & \leq\sum_{\ell=1}^{nk}\sum_{|S|=\ell}|\hat{g}(S)|\left|\Exp_{\Ucal_{n,k}}\prod_{(i,j)\in S}x_{i,j}-\Exp_{\Dcal_{n,k,U}}\prod_{(i,j)\in S}x_{i,j}\right|\\
 & \leq\sum_{\ell=k}^{nk}\sum_{|S|=\ell}|\hat{g}(S)|\left|\Exp_{\Dcal_{n,k,U}}\prod_{(i,j)\in S}x_{i,j}\right|,
\end{align*}
where the first step is an application of the triangle inequality; the second step
is justified by $\Exp_{\Ucal_{n,k}}1=\Exp_{\Dcal_{n,k,U}}1=1;$ and
the third step is valid due to~(\ref{eq:Dcal-Fourier-less-than-k})
and the identity $\Exp_{\Ucal_{n,k}}\prod_{(i,j)\in S}x_{i,j}=0$
for nonempty $S$. Let $d$ be the minimum depth of a decision tree
that computes $g.$ Applying~(\ref{eq:Dcal-Fourier-k-or-higher})
then Corollary~\ref{cor:main-Fourier-dt}, we conclude that
\[
\left|\Exp_{\Ucal_{n,k}}g-\Exp_{\Dcal_{n,k,U}}g\right|\leq\sum_{\ell=k}^{nk}c_{1}^{\ell}\sqrt{\binom{d}{\ell}(1+\ln nk)^{\ell-1}}\left(\frac{c_{2}\ell\log n}{n}\right)^{\frac{\ell}{2}\cdot\frac{k-1}{k}},
\]
where $c_{1}\geq1$ and $c_{2}\geq1$ are the absolute constants in
Corollary~\ref{cor:main-Fourier-dt} and Fact~\ref{fact:tal-gaussian}.
In view of~(\ref{eq:entropy-bound-binomial}), this gives
\begin{align*}
\left|\Exp_{\Ucal_{n,k}}g-\Exp_{\Dcal_{n,k,U}}g\right| & \leq\sum_{\ell=k}^{\infty}\left(c_{1}^{2}\cdot\frac{\e d}{\ell}\cdot(1+\ln nk)^{\frac{\ell-1}{\ell}}\cdot\left(\frac{c_{2}\ell\log n}{n}\right)^{\frac{k-1}{k}}\right)^{\frac{\ell}{2}}\\
 & \leq\sum_{\ell=k}^{\infty}\left(c_{1}^{2}\cdot\e d\cdot(1+\ln nk)\cdot\left(\frac{c_{2}\log n}{n}\right)^{\frac{k-1}{k}}\right)^{\frac{\ell}{2}}\\
 & \leq\sum_{\ell=k}^{\infty}\left(\frac{cd}{4}\cdot\frac{\log^{2-\frac{1}{k}}(n+k)}{n^{1-\frac{1}{k}}}\right)^{\frac{\ell}{2}},
\end{align*}
where $c\geq1$ in the last step is a sufficiently large absolute
constant. This settles~(\ref{eq:g-not-a-distinguisher}) in the case
when $cd\log^{(2k-1)/k}(n+k)\leq n^{(k-1)/k}.$ In the complementary
case, (\ref{eq:g-not-a-distinguisher}) follows from the trivial bound
$|\Exp_{\Ucal_{n,k}}g-\Exp_{\Dcal_{n,k,U}}g|\leq1.$
\end{proof}
We have reached the main result of this section, an essentially tight
lower bound on the randomized query complexity of the $k$-fold rorrelation
problem.

\begin{theorem}
\label{thm:randomized-lower-bound} Let $n$ be a positive integer.
Let $U\in\Re^{n\times n}$ be a
uniformly random orthogonal matrix. Then with probability $1-o(1)$ 
over the choice of $U,$ the following holds for all positive integers $k\leq\frac{1}{3}\log n-1$$:$
\begin{equation}
R_{1/2^{k+1}}(f_{n,k,U})=\Omega\left(\frac{n^{1-\frac{1}{k}}}{(\log n)^{2-\frac{1}{k}}}\right),\label{eq:R-lower-bound}
\end{equation}
and in particular
\begin{align}
R_{\frac{1}{2}-\gamma}(f_{n,k,U}) & =\Omega\left(\frac{\gamma^{2}}{k}\cdot\frac{n^{1-\frac{1}{k}}}{(\log n)^{2-\frac{1}{k}}}\right), &  & 0\leq\gamma\leq\frac{1}{2}.\label{eq:R-lower-bound-general}
\end{align}
\end{theorem}

\begin{proof}
We will prove the lower bounds for every $U$ that satisfies (\ref{eq:Dcal-yes-instances})
and~(\ref{eq:g-not-a-distinguisher}) for all $k\geq1,$ which happens with probability
$1-o(1)$ by Fact~\ref{fact:tal-gaussian} and Lemma~\ref{lem:g-not-a-distinguisher}.
To begin with,
\begin{align}
\Prob_{\Ucal_{n,k}}[f_{n,k,U}(x)\ne0] & =\Prob_{\Ucal_{n,k}}[|\phi_{n,k,U}(x)|>2^{-k-1}]\nonumber \\
 & \leq4^{k+1}\Exp_{\Ucal_{n,k}}[\phi_{n,k,U}(x)^{2}]\nonumber \\
 & \leq\frac{4^{k+1}}{n}\nonumber \\
 & \leq\frac{1}{2^{k+1}},\label{eq:U-supported-on-no}
\end{align}
where the last three steps use Markov's inequality, Fact~\ref{fact:phi-Uniform},
and $k\leq\frac{1}{3}\log n-1,$ respectively. Also,
\begin{align*}
\left(\frac{2}{\pi}\right)^{k-1} & \leq\Exp_{\Dcal_{n,k,U}}\phi_{n,k,U}(x)\\
 & \leq2^{-k}\Prob_{\Dcal_{n,k,U}}[\phi_{n,k,U}(x)<2^{-k}]+\Prob_{\Dcal_{n,k,U}}[\phi_{n,k,U}(x)\geq2^{-k}]\\
 & =2^{-k}(1-\Prob_{\Dcal_{n,k,U}}[f_{n,k,U}(x)=1])+\Prob_{\Dcal_{n,k,U}}[f_{n,k,U}(x)=1]\\
 & =2^{-k}+(1-2^{-k})\Prob_{\Dcal_{n,k,U}}[f_{n,k,U}(x)=1],
\end{align*}
where the first and second steps are justified by~(\ref{eq:Dcal-yes-instances})
and~(\ref{eq:rorrelation-bounded}), respectively. The last equation
shows that
\begin{align}
\Prob_{\Dcal_{n,k,U}}[f_{n,k,U}(x)=1] & \geq\left(\frac{2}{\pi}\right)^{k-1}-2^{-k}\nonumber \\
 & \geq2^{-k}.\label{eq:D-supported-on-yes}
\end{align}

Now fix arbitrary parameters $d\geq1$ and $0\leq\epsilon\leq1/2,$
and consider a randomized query algorithm of cost $d$ that computes
$f_{n,k,U}$ with error at most $\epsilon.$ Then the algorithm's
acceptance probability on given input $x$ is $\Exp_{r}g_{r}(x),$
where $r$ denotes a random string and each $g_{r}\colon\pomo^{n\times k}\to\zoo$
is computable by a decision tree of depth at most $d.$ Since the
error is at most $\epsilon,$ we have 
\begin{equation}
\Prob_{r}[f_{n,k,U}(x)=0,\,g_{r}(x)=1]+\Prob_{r}[f_{n,k,U}(x)=1,\,g_{r}(x)=0]\leq\epsilon\label{eq:f-g-error}
\end{equation}
for every $x\in\pomo^{n\times k}.$ We thus obtain the two inequalities
\begin{align}
 & \Exp_{r}\Prob_{\Ucal_{n,k}}[f_{n,k,U}(x)=0,\,g_{r}(x)=1]\leq\epsilon,\label{eq:error-U}\\
 & \Exp_{r}\Prob_{\Dcal_{n,k,U}}[f_{n,k,U}(x)=1,\,g_{r}(x)=0]\leq\epsilon,\label{eq:error-D}
\end{align}
by passing to expectations in~(\ref{eq:f-g-error}) with respect
to $x\sim\Ucal_{n,k}$ and $x\sim\Dcal_{n,k,U},$ respectively. On
the other hand, (\ref{eq:g-not-a-distinguisher}) and $k=O(\log n)$
imply
\begin{equation}
\Exp_{r}\left|\Exp_{\Dcal_{n,k,U}}g_{r}-\Exp_{\Ucal_{n,k}}g_{r}\right|\leq\left(c'd\cdot\frac{(\log n)^{2-\frac{1}{k}}}{n^{1-\frac{1}{k}}}\right)^{\frac{k}{2}}\label{eq:g_r-not-a-distinguisher}
\end{equation}
for some absolute constant $c'\geq1.$

We now have all the ingredients to complete the proof. For each $r,$
we have
\begin{align}
\Exp_{\Dcal_{n,k,U}}g_{r} & =\Prob_{\Dcal_{n,k,U}}[g_{r}(x)=1]\nonumber \\
 & \geq\Prob_{\Dcal_{n,k,U}}[f_{n,k,U}(x)=1]-\Prob_{\Dcal_{n,k,U}}[f_{n,k,U}(x)=1,\;g_{r}(x)=0]\nonumber \\
 & \geq2^{-k}-\Prob_{\Dcal_{n,k,U}}[f_{n,k,U}(x)=1,\;g_{r}(x)=0],\label{eq:fixed-r-random-D}
\end{align}
where the last step uses~(\ref{eq:D-supported-on-yes}). Similarly,
\begin{align}
\Exp_{\Ucal_{n,k}}g_{r} & =\Prob_{\Ucal_{n,k}}[g_{r}(x)=1]\nonumber \\
 & \leq\Prob_{\Ucal_{n,k}}[f_{n,k,U}(x)\ne0]+\Prob_{\Ucal_{n,k}}[f_{n,k,U}(x)=0,\;g_{r}(x)=1]\nonumber \\
 & \leq2^{-k-1}+\Prob_{\Ucal_{n,k}}[f_{n,k,U}(x)=0,\;g_{r}(x)=1],\label{eq:fixed-r-random-U}
\end{align}
where the last step uses~(\ref{eq:U-supported-on-no}). Passing to
expectations in (\ref{eq:fixed-r-random-D}) and~(\ref{eq:fixed-r-random-U})
with respect to $r$ gives
\begin{multline*}
\Exp_{r}\left[\Exp_{\Dcal_{n,k,U}}g_{r}-\Exp_{\Ucal_{n,k}}g_{r}\right]\geq2^{-k-1}-\Exp_{r}\Prob_{\Dcal_{n,k,U}}[f_{n,k,U}(x)=1,\;g_{r}(x)=0]\\
-\Exp_{r}\Prob_{\Ucal_{n,k}}[f_{n,k,U}(x)=0,\;g_{r}(x)=1],
\end{multline*}
which in view of~(\ref{eq:error-U}) and~(\ref{eq:error-D}) simplifies
to
\begin{align*}
\Exp_{r}\left[\Exp_{\Dcal_{n,k,U}}g_{r}-\Exp_{\Ucal_{n,k}}g_{r}\right] & \geq2^{-k-1}-2\epsilon.
\end{align*}
Comparing this lower bound with~(\ref{eq:g_r-not-a-distinguisher}),
we arrive at
\[
\left(c'd\cdot\frac{(\log n)^{2-\frac{1}{k}}}{n^{1-\frac{1}{k}}}\right)^{\frac{k}{2}}\geq2^{-k-1}-2\epsilon.
\]
Taking $\epsilon=2^{-k-3}$ and solving for $d,$ we find that
\[
R_{2^{-k-3}}(f_{n,k,U})=\Omega\left(\frac{n^{1-\frac{1}{k}}}{(\log n)^{2-\frac{1}{k}}}\right).
\]
By the error reduction formula~(\ref{eq:error-reduction}), this
settles~(\ref{eq:R-lower-bound}) and~(\ref{eq:R-lower-bound-general}).
\end{proof}
Theorem~\ref{thm:randomized-lower-bound} settles Theorem~\ref{thm:MAIN-randomized-lower-bound}
from the introduction. Corollary~\ref{cor:bounded-error-t-separation}
now follows from (\ref{eq:q-weak-alg}) and Theorem~\ref{thm:MAIN-randomized-lower-bound}
by taking $k=2t$ and $\gamma=1/6$. Similarly, Corollary~\ref{cor:MAIN-bounded-separation}
follows from (\ref{eq:q-upper-bounded-error}) and Theorem~\ref{thm:MAIN-randomized-lower-bound}
by taking $k=\lceil1/\epsilon\rceil+1$ and $\gamma=1/6$. Finally,
Corollary~\ref{cor:MAIN-near-linear-separation} follows from (\ref{eq:q-upper-bounded-error})
and Theorem~\ref{thm:MAIN-randomized-lower-bound} by setting $\gamma=1/6$
and taking $k=k(n)$ to be a sufficiently slow-growing function.

\section*{Acknowledgments}

The authors are thankful to Nikhil Bansal, Dmitry Gavinsky, Makrand
Sinha, Avishay Tal, and Ronald de Wolf for valuable comments on an
earlier version of this paper. We are additionally thankful to Nikhil
and Makrand for reminding us that quantum-classical query separations
automatically imply analogous separations in communication complexity.

\bibliographystyle{siamplain}
\bibliography{refs}

\end{document}